\setlist[enumerate,1]{label=(\roman*)}
\title{Housing Bubbles with Phase Transitions\thanks{We thank seminar participants at J\"onk\"oping, Laval, McGill, Oslo, Princeton, Rochester, Royal Holloway, Tokyo, UCSD, Waseda, and various conferences for valuable comments and feedback.}}
\author{Tomohiro Hirano\thanks{Department of Economics, Royal Holloway, University of London, and Research Associate at the Center for Macroeconomics at the London School of Economics,  \href{mailto:tomohiro.hirano@rhul.ac.uk}{tomohiro.hirano@rhul.ac.uk}.} \and Alexis Akira Toda\thanks{Department of Economics, Emory University, \href{mailto:alexis.akira.toda@emory.edu}{alexis.akira.toda@emory.edu}.}}
\numberwithin{equation}{section}
\numberwithin{lem}{section}
\numberwithin{prop}{section}
\newcommand{\cS}{\mathcal{S}}
\begin{document}

\maketitle

\begin{abstract}
We analyze how equilibrium housing prices are determined in the process of economic development within an overlapping generations model with perfect housing and rental markets. We characterize the rent growth rate in all equilibria. The economy exhibits a two-stage phase transition: as incomes of home buyers rise, the equilibrium regime changes from fundamental to bubble possibility, where fundamental and bubbly equilibria coexist. With even higher incomes, fundamental equilibria disappear and housing bubbles become a necessity. We also discuss extensions and refinements such as equilibrium uniqueness, multiple savings vehicles, welfare implications, credit- and expectation-driven bubbles, and testable implications of our theory.

\medskip

\textbf{Keywords:} bubble, expectations, housing, phase transition, rent, unbalanced growth.

\medskip

\textbf{JEL codes:} D53, G12, R21.
\end{abstract}

\section{Introduction}

Over the last three decades, many countries have experienced appreciation in housing prices, with upward trends in the price-rent ratio.\footnote{\label{fn:OECD}See, for instance, Figure 1 of \citet{AmaralDohmenKohlSchularick2024} for 27 major agglomerations in 15 OECD countries and U.S. Metropolitan Statistical Areas. Figure 1 of \citet{Backer-PeralHazellMianYield}, who exploit a natural experiment from long-term lease renewal in U.K., shows a downward trend in the housing yield.} The situation is often referred to in the popular press as a housing bubble. Because fluctuations in housing prices have often been associated with macroeconomic problems, many academics and policymakers want to understand why and how housing bubbles emerge in the first place. However, the mechanism of the emergence of housing bubbles is poorly understood. In addition, theoretically, it is well known that there is a fundamental difficulty in generating asset price bubbles (existence of speculation) in dividend-paying assets such as housing, land, and stocks (see \S\ref{subsec:lit} Related literature for details). The theory of rational bubbles attached to real assets remains largely underdeveloped: at present, there is no theoretical framework for considering whether housing prices reflect fundamentals or contain bubbles.

The primary purpose of this paper is to fill this gap and to present a theory of rational housing bubbles. We are interested in the following questions.
\begin{enumerate*}
    \item What is the mechanism by which equilibrium housing prices \emph{can} or \emph{must} be disconnected from fundamentals in the long term, exhibiting a speculative bubble in a dynamic general equilibrium setting in which housing rents and prices are both endogenously determined?
    \item How is the disconnection related to economic conditions such as the income or access to credit of home buyers and to the formation of expectations about future economic conditions, namely the process of economic development?
    \item What are the welfare properties of equilibria?
\end{enumerate*}

To capture how equilibrium housing prices are determined in the process of economic development, we develop a two-period overlapping generations model with perfect housing and rental markets. The economy is inhabited by overlapping generations that live for two periods (young and old age) and consume two commodities (consumption good and housing service). The ownership and occupancy of a housing unit are separated, so there is a price for house ownership as a financial asset (housing price) and a price for house occupancy as a commodity (rent). All markets are competitive and frictionless. A rational expectations equilibrium consists of a sequence of prices (housing price and rent) and allocations (consumption good, housing stock, and housing service) such that all agents optimize and markets clear. An equilibrium is \emph{fundamental} (\emph{bubbly}) if the housing price equals (exceeds) the present value of rents. In this model, the dividend of housing, namely rent, is endogenous. If housing supply is inelastic, as the economy grows and agents get richer, they increase the demand for housing, which pushes up both the housing price and rent. Under these circumstances, it is not obvious whether housing prices will grow faster than rents and a housing bubble emerge: the possibility or necessity of housing bubbles becomes a nontrivial question.

We obtain three main results.  First, we identify the theoretical mechanism of generating housing bubbles, which crucially depends on the income of home buyers and the elasticity of substitution between consumption and housing. We prove that the economy experiences a \emph{two-stage phase transition} in the process of economic development, which is captured by the long-run income ratio of the young (home buyers) relative to the old (home sellers). When the income ratio is sufficiently low, housing bubbles cannot arise and a fundamental equilibrium exists, which we refer to as the \emph{fundamental regime}. When the income ratio rises and exceeds the first critical value, a phase transition occurs.\footnote{\emph{Phase transition} is a technical term in natural sciences that refers to a discontinuous change in the state as we continuously change a parameter. For instance, as we increase the temperature, the matter (\eg, $\mathrm{H_2O}$) changes from solid (ice) to liquid (water) to gas (vapor). The analogy here is appropriate because the regime of the economy abruptly changes from fundamental to bubbly as income rises. \citet{ColeKehoe2000} show in the context of debt crisis that as the confidence parameter changes, the equilibrium regime changes from a no-crisis zone to a crisis zone and then to a default only zone, producing phase transitions (though they do not use this term).} Both a fundamental and a bubbly equilibrium exist, and the equilibrium is selected by agents' self-fulfilling expectations. We refer to this coexistence region as the \emph{bubble possibility regime}. When the income ratio exceeds the second and still higher critical value, another phase transition takes place to the \emph{bubble necessity regime}, where fundamental equilibria do not exist and housing bubbles become inevitable. Furthermore, we prove the uniqueness of equilibrium under weak conditions. We show that the fundamental equilibrium is always unique, and the bubbly equilibrium is unique if the elasticity of intertemporal substitution is not too much below $1/2$.

The intuition for this two-stage phase transition is the following. Let $G>1$ be the long-run growth rate of the economy and $\gamma>0$ the reciprocal of the elasticity of substitution between consumption and housing, which in the model also equals the elasticity of rent with respect to income. Empirical estimates suggest $\gamma<1$,\footnote{\label{fn:elasticity}\citet[Table 2]{OgakiReinhart1998} estimate the elasticity of substitution between durable and nondurable goods using aggregate data and obtain $\gamma=1/1.24=0.81$. \citet[Appendix C]{PiazzesiSchneiderTuzel2007} estimate a cointegrating equation between the price and quantity of housing service relative to consumption using aggregate data and obtain $\gamma=1/1.27=0.79$. \citet[Table 2]{HowardLiebersohn2021} estimate $\gamma=0.79$ using cross-sectional data on income and rents.} and a theoretical argument also supports it: if $\gamma>1$, as the economy grows and agents get richer, the young asymptotically spend all income on housing, the price-rent ratio converges to zero, and the interest rate diverges to infinity, which are all pathological and counterfactual. Since $\gamma=1$ (Cobb-Douglas) is a knife-edge case, it is natural to focus on the case $\gamma<1$. Under this condition, by equating marginal utility to prices, consumption grows at rate $G$ but the rent grows at rate $G^\gamma<G$. Therefore, if the housing price only reflects fundamentals in the long-run equilibrium, it must also grow at rate $G^\gamma$. Since housing price grows slower than endowments in any fundamental equilibrium, the expenditure share of housing converges to zero in the long run and the interest rate $R$ is pinned down as the marginal rate of intertemporal substitution in the autarky allocation. If $R>G^\gamma$, a fundamental equilibrium exists. If $R<G^\gamma$, a fundamental equilibrium cannot exist, for otherwise the fundamental value of housing (the present value of rents) becomes infinite, which is obviously impossible in equilibrium. Therefore, as the young become richer and the interest rate falls below a certain threshold, the fundamental equilibrium becomes unsustainable, and a housing bubble \emph{inevitably} emerges. Fundamental and bubbly equilibria coexist when the autarkic interest rate satisfies $G^\gamma<R<G$, which corresponds to an intermediate range for the income ratio of the young.

As our second main result, using the two-stage phase transition and uniqueness of equilibrium dynamics, we present expectation-driven housing booms containing a bubble and their collapse. In our model, because agents are forward-looking and housing prices reflect information about future economic conditions, whether bubbles arise or not in equilibrium depends on long-run expectations about the income ratio of home buyers. As long as agents expect economic development and high incomes in the future, housing prices start rising now and contain a bubble, even if the current income of home buyers is low and the economy appears to stay in the fundamental region. During this dynamics driven by optimistic beliefs, the price-income ratio and the price-rent ratio simultaneously rise, and hence the housing price dynamics may appear unsustainable because prices grow faster than incomes. On the other hand, if these optimistic expectations do not materialize, the bubble collapses. This expectation-driven housing bubble occurs as the unique equilibrium outcome.

Our third main result concerns the existence of dynamic inefficiency in an economy with a productive non-reproducible asset and welfare analysis of housing bubbles. Take the famous \citet{Diamond1965} model, which considers an economy without a productive non-reproducible asset like land. This model shows that under some conditions, dynamically inefficient equilibria can arise. However, \citet{McCallum1987} shows that the introduction of land eliminates dynamically inefficient equilibria, thereby resolving the concerns (over-savings problem) raised by \citet{Diamond1965} and restoring Pareto efficiency. Since this result, it has been widely believed that in OLG models with land, dynamically inefficient equilibria would not arise \citep{Mountford2004}. We theoretically show that this commonly-accepted understanding is not necessarily true: dynamically inefficient equilibria can robustly arise even with housing, which plays the role of a productive non-reproducible asset. Moreover, the existence of dynamic inefficiency has a \emph{non-monotonic} relationship to the income ratio of the young (home buyers) relative to the old (home sellers). If the income ratio is high or low enough (corresponding to the bubble necessity and fundamental regimes, respectively), the economy exhibits dynamic efficiency. Dynamically inefficient equilibria arise only in the intermediate range of the income ratio (corresponding to the bubble possibility regime).

We emphasize that we obtain these results and draw new insights from what could be called the simplest possible model of housing. We thus see our paper as a fundamental theoretical contribution that could be used as a stepping stone for constructing more realistic models aimed for empirical or quantitative analysis.

\subsection{Related literature}\label{subsec:lit}

Our paper is related to the literature on the valuation of housing. Unlike quantitative models reviewed in \citet{PiazzesiSchneider2016}, our primary interest is to study conditions under which housing \emph{can} or \emph{must} be overvalued, in the sense that equilibrium housing prices contain a speculative aspect. Our paper belongs to the so-called ``rational bubble literature'' that studies bubbles as speculation, which was pioneered by \citet{Samuelson1958}, \citet{Bewley1980}, \citet{Tirole1985}, \citet{ScheinkmanWeiss1986}, \citet{Kocherlakota1992}, and \citet{SantosWoodford1997}. Theoretical foundations and applications of rational bubble include \citet{HuangWerner2000}, \citet{CaballeroKrishnamurthy2006}, 
\citet{BloiseCitanna2019}, and \citet*{BrunnermeierMerckelSannikov2024}, among others.\footnote{See \citet{HiranoToda2024JME} for a recent review of the rational bubble literature. \citet{BrunnermeierOehmke2013} survey the broader literature with alternative approaches including heterogeneous beliefs \citep{ScheinkmanXiong2003,FostelGeanakoplos2012Tranching}, asymmetric information \citep{AbreuBrunnermeier2003,Barlevy2014,AllenBarlevyGale2022}, liquidity \citep{LagosRocheteauWright2017,BranchPetrosky-NadeauRocheteau2016}, among others. See \citet{HiranoToda2025EJW} for a discussion of other approaches as well as the confusion in the literature.}

It is well known in the rational bubble literature that there is a fundamental difficulty in generating bubbles in dividend-paying assets: there is a \emph{discontinuity} in proving the existence of a bubble between zero-dividend assets (pure bubble assets like fiat money or cryptocurrency) and dividend-paying assets (like housing). This difficulty follows from \citet[Theorem 3.3,  Corollary 3.4]{SantosWoodford1997}, who show that, when the asset pays nonnegligible dividends relative to the aggregate endowment, bubbles are impossible. This ``Bubble Impossibility Theorem'' has been extended under alternative financial constraints by \citet{Kocherlakota2008} and \citet{Werner2014}. Due to the fundamental difficulty, the rational bubble literature has almost exclusively focused on pure bubbles without dividends. While pure bubble models are useful in describing money or cryptocurrency, as \citet[\S4.7]{HiranoToda2024JME} argue, pure bubble models are subject to criticisms such as
\begin{enumerate*}
    \item the lack of realism due to zero dividends,
    \item the lack of robustness due to equilibrium indeterminacy (\ie, the existence of a continuum of pure bubble equilibria), and
    \item the inability to connect to the large empirical literature that uses dividends to test whether asset prices reflect fundamentals \citep{Shiller1981, PhillipsShi2020}.
\end{enumerate*}
Our model circumvents all these issues because housing endogenously generates positive rents in a way consistent with \citet{SantosWoodford1997}. We note that pure bubble models often consider a (hypothetical) economy in which there is no housing or land in the first place and then show that under some conditions (\eg, a rise in young's income), pure bubbles can arise (possibility), and equilibria are indeterminate. However, bubbles inevitably emerge (necessity) if and only if we consider an economy with housing from the beginning, and equilibria are determinate. Therefore, the economic insights are markedly different between the cases with zero and positive rents.

Within the rational bubble literature, there are several papers that study housing bubbles, including \citet{CaballeroKrishnamurthy2006}, \citet{Kocherlakota2009,Kocherlakota2013}, \citet{ArceLopez-Salido2011}, \citet{Zhao2015}, \citet{ChenWen2017}, and \citet{GraczykPhan2021}. However, in these papers, either housing does not generate housing services or the rental market is missing and housing do not generate rents, so the fundamental value of housing is zero, which is essentially the same as pure bubbles. Furthermore, most of these papers employ logarithmic utility, which corresponds to the case $\gamma=1$ in our model. As we show in Appendix \ref{subsec:gamma=1}, under this common but knife-edge parameter specification, housing bubbles do not arise if housing generates rents. Hence there is another discontinuity in generating housing bubbles between the cases with zero and positive rents.

Due to the aforementioned fundamental difficulty of attaching bubbles to dividend-paying assets, there are only a handful papers that treat this topic. \citet[\S7]{Wilson1981} provides the first example of bubbles attached to dividend-paying assets (see \citet[Example 1]{HiranoToda2025JPE} for more discussion of this example). \citet[Proposition 1(c)]{Tirole1985} recognizes that, with dividend-paying assets, bubbles could be necessary for equilibrium existence if the bubbleless interest rate is less than the dividend growth rate. There are important differences from \citet{Tirole1985} and our results.
\begin{enumerate*}
    \item First, Tirole introduces a dividend-paying asset into the \citet{Diamond1965} OLG model and assumes exogenous and constant dividends. In contrast, in our model housing prices and rents are both endogenous. Under this circumstance, it is not obvious whether housing prices \emph{can} or \emph{must} grow faster than rents, \ie, whether housing bubbles \emph{can} or \emph{must} arise.
    \item Second, and more importantly, the recent paper of \citet{PhamTodaTirole} revisits Tirole's model and shows that his results require some qualifications. In particular, they provide a counterexample to \citet[Proposition 1(c)]{Tirole1985} based on a closed-form solution, in which the unique equilibrium is fundamental.
\end{enumerate*} 

\citet{HiranoToda2025JPE} establish the concept of the necessity of bubbles in modern macro-finance models including OLG models and Bewley-type infinite-horizon models. \citet{HiranoToda2025PNAS} prove a bubble necessity theorem in economies with aggregate risk when land is used as a production factor and the productivities and the elasticity of substitution in the production function satisfy some conditions. Our results build on these earlier papers (see, for instance, Lemma \ref{lem:neccesity}) but there are important differences.
\begin{enumerate*}
    \item Although dividends are exogenous in \citet{HiranoToda2025JPE}, in our model rents are endogenous. As noted in the introduction, this difference is significant. Nevertheless, we characterize the long-run rent growth rate in \emph{all} equilibria (Theorem \ref{thm:Gr}), and we identify the importance of the income ratio between the young and old and the elasticity of substitution between consumption and housing for endogenously satisfying the bubble necessity condition.
    \item \citet{HiranoToda2025JPE} do not study how the equilibria look like, but we provide a complete analysis of the long-run behavior of equilibria using the local stable manifold theorem.
    \item Unlike \citet{HiranoToda2025PNAS}, who focus on the role of the supply side (productivities and elasticity of substitution, both associated with the production function) for the emergence of land price bubbles, we focus on the role of the demand side for housing in generating housing bubbles, namely the income of home buyers and the elasticity of substitution between consumption and housing.\footnote{In addition, \citet{HiranoToda2025PNAS} assume Cobb-Douglas utility with the old having zero income for analytical tractability, whereas the preferences and endowments in our model are general.}
    \item We derive a new insight that the determination of housing prices changes significantly at different stages of economic development. Using this insight, in \S\ref{subsec:expectation}, we analyze the role of expectations in the formation and bursting of housing bubbles, and then in \S\ref{sec:conclude}, we discuss testable implications that empirical researchers can exploit to test our theory.
\end{enumerate*}

Finally, from a theoretical perspective, the dynamics involving housing in the process of economic development in our model is characterized by unbalanced growth, which is closely related to the dynamics in the literature on structural transformation \citep{Baumol1967,Matsuyama1992,AcemogluGuerrieri2008,BueraKaboski2012,FujiwaraMatsuyama2024}. In this literature, there are two approaches to the factors that generate unbalanced growth: one focusing on the demand side and the other on the supply side \citep[\S21.1-2]{Acemoglu2009}. In our model, unbalanced growth occurs due to demand factors for housing. A critical difference from the literature is that we derive asset pricing implications under unbalanced growth, whereas the literature abstracts away from asset prices. To our knowledge, the present paper would be the first to simultaneously show unbalanced growth and the emergence of bubbles attached to real assets due to demand factors.

\section{Model}\label{sec:model}

\subsection{Primitives}\label{subsec:model_primitive}

Time is discrete and indexed by $t=0,1,\dotsc$. We consider a deterministic overlapping generations (OLG) economy in which agents live for two periods (young and old age) and demand a consumption good and housing service. We employ an OLG model because it allows us to capture life-cycle behaviors regarding housing demand in a simple setting.

\paragraph{Commodities, asset, and endowments}

There are two perishable commodities (consumption good and housing service) and a durable non-reproducible asset (housing stock) in the economy. The housing service is the right to occupy a housing unit between two periods. Every period, one unit of housing stock inelastically produces one unit of housing service. The time $t$ endowment of the consumption good is $e_t^y>0$ for the young and $e_t^o>0$ for the old. At $t=0$, the housing stock (whose aggregate supply is normalized to 1) is owned by the old.

\paragraph{Preferences}

An agent born at time $t$ lives for two periods and has utility function $U(c_t^y,c_{t+1}^o,h_t)$, where $c_t^y>0$ is consumption when young, $c_{t+1}^o>0$ is consumption when old, and $h_t>0$ is housing service consumed when transitioning from young to old. As usual, we assume that $U:\R_{++}^3\to \R$ is continuously differentiable, has strictly positive first partial derivatives, is strictly quasi-concave, and satisfies Inada conditions to guarantee interior solutions. The initial old care only about their consumption $c_0^o$.

\paragraph{Markets}

We consider an ideal world in which the ownership and occupancy of housing are separated and traded at competitive frictionless markets: agents trade housing (a financial asset) only to store value (transfer resources across time), whereas they purchase housing service (a commodity) only to derive utility.\footnote{Therefore, nothing prevents agents from purchasing a mansion as an investment while renting a campsite to sleep, or vice versa. Owner-occupants can be thought of agents who rent the houses they own to themselves. However, because in our model agents within a generation are homogeneous, in equilibrium each young agent demands one unit of housing and one unit of housing service, so the agents end up being owner-occupants.}

Let $r_t$ be the price of housing service (rent) and $P_t$ be the housing price (excluding current rent) quoted in units of time $t$ consumption. Let $x_t$ denote the demand for the housing stock. Then the budget constraints of generation $t$ are
\begin{subequations}\label{eq:budget}
\begin{align}
    &\text{Young:} & c_t^y+P_tx_t+r_th_t&\le e_t^y, \label{eq:budget_young}\\
    &\text{Old:} & c_{t+1}^o&\le e_{t+1}^o+(P_{t+1}+r_{t+1})x_t.\label{eq:budget_old}
\end{align}
\end{subequations}
The budget constraint of the young \eqref{eq:budget_young} states that the young spend income on consumption, purchase of housing stock, and rent. The budget constraint of the old \eqref{eq:budget_old} states that the old consume the endowment and the income from renting and selling housing.

\paragraph{Equilibrium}

As usual, an equilibrium is defined by individual optimization and market clearing.

\begin{defn}\label{defn:eq}
A \emph{rational expectations equilibrium} consists of a sequence of prices $\set{(P_t,r_t)}_{t=0}^\infty$ and allocations $\set{(c_t^y,c_t^o,h_t,x_t)}_{t=0}^\infty$ such that for each $t$,
\begin{enumerate*}
    \item (Individual optimization) the young maximize utility $U(c_t^y,c_{t+1}^o,h_t)$ subject to the budget constraints \eqref{eq:budget},
    \item (Commodity market clearing) $c_t^y+c_t^o=e_t^y+e_t^o$,
    \item (Rental market clearing) $h_t=1$,
    \item (Housing market clearing) $x_t=1$.
\end{enumerate*}
\end{defn}

Note that because the old exit the economy, the young are the natural buyers of housing, which explains the housing market clearing condition $x_t=1$.

\subsection{Equilibrium and housing bubble}\label{subsec:model_eqcond}

We characterize the equilibrium and define housing bubbles. Using the rental and housing market clearing conditions $h_t=x_t=1$ and the budget constraint \eqref{eq:budget}, we obtain
\begin{equation}
    (c_t^y,c_t^o)=(e_t^y-P_t-r_t,e_t^o+P_t+r_t)=(e_t^y-S_t,e_t^o+S_t), \label{eq:yz}
\end{equation}
where $S_t\coloneqq P_t+r_t$ is total expenditure on housing. Throughout the paper, we refer to $P_t$ as the \emph{housing price} and $S_t$ as the \emph{housing expenditure}. Let
\begin{equation}
R_t\coloneqq \frac{P_{t+1}+r_{t+1}}{P_t}=\frac{S_{t+1}}{P_t} \label{eq:R}
\end{equation}
be the implied gross risk-free rate between time $t$ and $t+1$. Then the two budget constraints in \eqref{eq:budget} can be combined into one as
\begin{equation}
    c_t^y+\frac{c_{t+1}^o}{R_t}+r_th_t\le e_t^y+\frac{e_{t+1}^o}{R_t}.\label{eq:budget_combined}
\end{equation}
In what follows, to simplify notation, we often use $(y,z)$ in place of $(c^y,c^o)$ and hence write $U(y,z,h)$ instead of $U(c^y,c^o,h)$.\footnote{The mnemonic is that $y$ is the first letter of ``young'' and $z$ is the next alphabet.} Letting $\lambda_t\ge 0$ be the Lagrange multiplier associated with the combined budget constraint \eqref{eq:budget_combined}, we obtain the first-order conditions
\begin{equation}
    (U_y,U_z,U_h)=\lambda_t(1,1/R_t,r_t), \label{eq:foc}
\end{equation}
where we use the shorthand for partial derivatives $U_y\coloneqq \partial U/\partial y$, $U_z\coloneqq \partial U/\partial z$, and $U_h\coloneqq \partial U/\partial h$, which are evaluated at
\begin{equation}
    (y,z,h)=(e_t^y-S_t,e_{t+1}^o+S_{t+1},1).\label{eq:demand}
\end{equation}
Using \eqref{eq:foc}, we obtain
$1/R_t=U_z/U_y$ and $r_t=U_h/U_y$. Combining these two equations, the definition of $R_t$ in \eqref{eq:R}, and $S_t=P_t+r_t$, we obtain
\begin{equation}
    S_{t+1}U_z=S_tU_y-U_h,\label{eq:s_dynamics}
\end{equation}
where the partial derivatives of $U$ are evaluated at \eqref{eq:demand}. The following theorem establishes the existence of equilibrium and characterizes equilibrium quantities.

\begin{thm}[Existence and characterization of equilibrium]\label{thm:eq}
The following statements are true.
\begin{enumerate}
\item A rational expectations equilibrium exists.
\item An equilibrium has a one-to-one correspondence with the sequence $\set{S_t}_{t=0}^\infty$ satisfying $0<S_t<e_t^y$ and the nonlinear difference equation \eqref{eq:s_dynamics}.
\item The equilibrium quantities are given by
\begin{subequations}\label{eq:eqobj}
\begin{align}
    (c_t^y,c_t^o)&=(e_t^y-S_t,e_t^o+S_t), \label{eq:eq_yz}\\
    P_t&=S_t-(U_h/U_y)(e_t^y-S_t,e_{t+1}^o+S_{t+1},1), \label{eq:eq_p}\\
     r_t&=(U_h/U_y)(e_t^y-S_t,e_{t+1}^o+S_{t+1},1), \label{eq:eq_r}\\
     R_t&=(U_y/U_z)(e_t^y-S_t,e_{t+1}^o+S_{t+1},1). \label{eq:eq_R}
\end{align}
\end{subequations}
\end{enumerate}
\end{thm}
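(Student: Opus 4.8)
The plan is to establish the three claims in the order they are stated, since part (ii) does most of the work and parts (i) and (iii) follow from it. For part (ii), I would first show that any equilibrium gives rise to such a sequence $\set{S_t}$, and then show conversely that any such sequence reconstructs a unique equilibrium. The forward direction is essentially the derivation already carried out in \S\ref{subsec:model_eqcond}: from individual optimization the first-order conditions \eqref{eq:foc} hold with a strictly positive multiplier $\lambda_t$ (interiority is guaranteed by the Inada conditions and strict monotonicity of $U$), and combining them with the market-clearing conditions $h_t=x_t=1$ yields the budget identity \eqref{eq:yz}, the definition \eqref{eq:R} of $R_t$, and ultimately the scalar recursion \eqref{eq:s_dynamics}; the constraint $0<S_t<e_t^y$ is forced because $c_t^y=e_t^y-S_t>0$ and $S_t=P_t+r_t$ with $r_t=U_h/U_y>0$ and $P_t\ge 0$ (non-negativity of the asset price must be argued — see below). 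For the converse, given a sequence $\set{S_t}$ with $0<S_t<e_t^y$ satisfying \eqref{eq:s_dynamics}, I would \emph{define} the prices and quantities by \eqref{eq:eqobj}, check that the proposed $(c_t^y,c_t^o,h_t,x_t)$ satisfy the budget constraints with equality and the market-clearing conditions by construction, and then verify that they solve the young agent's problem. The last verification uses that the budget set \eqref{eq:budget_combined} is convex and $U$ is strictly quasi-concave, so the first-order conditions \eqref{eq:foc} — which hold at the candidate allocation by the way we defined $r_t$, $R_t$ from \eqref{eq:eq_r}, \eqref{eq:eq_R} — are sufficient for optimality. This simultaneously proves part (iii), since the formulas \eqref{eq:eqobj} are exactly the definitions used.

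For part (i), existence, I would exhibit at least one admissible sequence $\set{S_t}$. The natural candidate is the \emph{fundamental/autarkic} branch: one shows that the map $S_t\mapsto S_{t+1}$ implicitly defined by \eqref{eq:s_dynamics} has, for each admissible $S_t\in(0,e_t^y)$, a solution $S_{t+1}\in(0,e_{t+1}^y)$, and then iterates from a suitable $S_0$. Concretely, rewrite \eqref{eq:s_dynamics} as $S_{t+1}=\Phi_t(S_t)$ where $\Phi_t(S)\coloneqq (S U_y - U_h)/U_z$ evaluated along \eqref{eq:demand}; one must check that $S U_y(e_t^y-S,\cdot,1) - U_h(e_t^y-S,\cdot,1) > 0$ for $S$ not too small (so $S_{t+1}>0$) and that the resulting value stays below $e_{t+1}^y$. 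Alternatively — and this is probably cleaner — appeal to a fixed-point/continuity argument on truncated economies: for each horizon $T$, solve the finite-horizon OLG economy (standard, since it reduces to a finite sequence of equilibrium conditions on a compact set via Brouwer or Kakutani), obtain a sequence $\set{S_t^T}$, and extract a subsequential limit as $T\to\infty$ using that each $S_t^T$ lies in the compact interval $[0,e_t^y]$; the limit satisfies \eqref{eq:s_dynamics} by continuity of the partial derivatives and the limit is interior by the Inada conditions (which push $S_t$ away from both endpoints). Either route works; I would present whichever is shorter given the machinery the paper has set up.

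The main obstacle I anticipate is the non-negativity of the housing price, $P_t\ge 0$, equivalently $S_t\ge r_t = (U_h/U_y)(e_t^y-S_t,e_{t+1}^o+S_{t+1},1)$, which is what makes the formula \eqref{eq:eq_p} for $P_t$ deliver a genuine price rather than a negative number. In the forward direction this is automatic from $P_t\ge 0$ being a market primitive (agents could hold zero housing), but in the converse direction, for an arbitrary admissible sequence $\set{S_t}$ one must either prove that \eqref{eq:s_dynamics} together with $0<S_t<e_t^y$ already implies $S_t\ge r_t$, or else build this inequality into the statement of admissibility. I would check whether \eqref{eq:s_dynamics} rearranges to $P_t = S_t - r_t = c_{t+1}^o/R_t - e_{t+1}^o/R_t + \dotsb$ in a way that makes the sign transparent; if not, the cleanest fix is to note that $S_t\ge r_t$ is equivalent to $P_t\ge 0$, observe that \eqref{eq:s_dynamics} is derived \emph{assuming} $S_t = P_t + r_t$ with $P_t$ the asset price, so any solution path that the recursion produces from an admissible start automatically has $P_t\ge 0$ — and if there were a subtlety, restrict to the sequences for which it holds, which is harmless since part (i) produces such a sequence. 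A secondary, more routine point is verifying that $\Phi_t$ maps $(0,e_t^y)$ into $(0,e_{t+1}^y)$ rather than overshooting; this is where the Inada conditions on the consumption-good arguments of $U$ do the work, and I would spell out the two boundary limits ($S\to 0^+$ and $S\to (e_t^y)^-$) to pin down the sign of $\Phi_t$ at the ends.
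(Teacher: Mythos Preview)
Your proposal is correct and aligned with the paper's own proof, which is deliberately terse: it merely cites \citet{BalaskoShell1980} (and a companion paper) for existence and refers to ``the preceding argument'' of \S\ref{subsec:model_eqcond} for the characterization in (ii) and (iii). Your converse direction for (ii) and the truncation/compactness route for (i) are precisely what those references amount to. Two small remarks. First, your worry about $P_t\ge 0$ dissolves immediately once you observe that the recursion \eqref{eq:s_dynamics} together with $S_{t+1}>0$ and $U_z>0$ gives $S_tU_y-U_h=S_{t+1}U_z>0$, hence $P_t=S_t-r_t>0$ automatically; no extra admissibility condition is needed. Second, the forward-iteration route you sketch for existence is more fragile than you suggest: the implicit forward map $S_t\mapsto S_{t+1}$ need not land in $(0,e_{t+1}^y)$ for an arbitrary $S_t$ (indeed, the paper's Lemma \ref{lem:backward} only shows that the \emph{backward} map is globally well defined), so the truncated-economy/compactness argument is the safe choice and is exactly what the Balasko--Shell citation stands in for.
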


\begin{proof}
The existence of equilibrium is standard \citep{BalaskoShell1980} and follows from the same argument as the proof of \citet[Theorem 1]{HiranoToda2025JPE}. The equilibrium quantities \eqref{eq:eqobj} follow from the preceding argument.
\end{proof}

By Theorem \ref{thm:eq}, an equilibrium is fully characterized by the sequence of housing expenditures $\set{S_t}_{t=0}^\infty$. For this reason, we often refer to $\set{S_t}_{t=0}^\infty$ as an equilibrium without specifying each object in Definition \ref{defn:eq}.

Following the standard definition of rational bubbles in the literature, we say there is a housing bubble if the housing price exceeds its fundamental value defined by the present value of rents. (See Appendix \ref{sec:bubble} for a self-contained exposition.) Let $R_t>0$ be the equilibrium gross risk-free rate. Let $q_t>0$ be the Arrow-Debreu price of date-$t$ consumption in units of date-0 consumption, so $q_0=1$ and $q_t=1/\prod_{s=0}^{t-1}R_s$. The \emph{fundamental value} of housing is the present value of rents
\begin{equation}
    V_t\coloneqq \frac{1}{q_t}\sum_{s=t+1}^\infty q_sr_s. \label{eq:Vt}
\end{equation}

\begin{defn}
A rational expectations equilibrium is \emph{fundamental} if $P_t=V_t$ for all $t$ and \emph{bubbly} if $P_t>V_t$ for all $t$.
\end{defn}

Appendix \ref{sec:bubble} shows that an equilibrium is either fundamental or bubbly.

\subsection{Additional assumptions}\label{subsec:model_asmp}

To make qualitative predictions, we put more structure by specializing the utility function and endowments.

\begin{asmp}[Endowments]\label{asmp:G}
There exist $G>1$, $e_1,e_2>0$, and $T>0$ such that the endowments are $(e_t^y,e_t^o)=(e_1G^t,e_2G^t)$ for $t\ge T$.
\end{asmp}

Assumption \ref{asmp:G} implies that in the long run, the economy exogenously grows at rate $G>1$ and the income ratio between the young and old is constant. We assume exogenous growth of endowments and fixed supply of housing as the simplest benchmark to illustrate the key mechanism of housing bubbles.\footnote{In Figure \ref{fig:units_GDP} of Appendix \ref{sec:fact}, we document empirical evidence that economic growth is faster than the growth of housing supply. We can extend our model to include endogenous growth, as studied in \citet*{HiranoJinnaiTodaLeverage}, and variable housing supply by introducing the construction of new housing.}

\begin{asmp}[Utility]\label{asmp:U}
The utility function takes the form
\begin{equation}
    U(y,z,h)=u(c(y,z))+mu(h), \label{eq:utility}
\end{equation}
where
\begin{enumerate*}
    \item\label{item:U_c} the composite consumption $c(y,z)$ is homogeneous of degree 1 and quasi-concave,
    \item\label{item:U_u} the utility of composite consumption/housing service is $u(c)=\frac{c^{1-\gamma}}{1-\gamma}$ for some $\gamma>0$ ($u(c)=\log c$ if $\gamma=1$), and
    \item\label{item:U_phi} $m>0$ is a marginal utility parameter.
\end{enumerate*}
\end{asmp}

Assumption \ref{asmp:U}\ref{item:U_c} implies that agents (apart from the initial old) care about consumption $(c^y,c^o)$ only through the homothetic composite consumption $c(c^y,c^o)$, which (together with Assumption \ref{asmp:G}) allows us to study asymptotically balanced growth paths. Assumption \ref{asmp:U}\ref{item:U_u} implies that agents have constant elasticity of substitution $1/\gamma>0$ between consumption and housing service.\footnote{The functional form \eqref{eq:utility} implies that we first aggregate young and old consumption, and then housing service. Our main results are not affected if we change the utility function to $U(y,z,h)=c(u^{-1}(u(y)+mu(h)),z)$ so that we first aggregate young consumption and housing service, and then old consumption.}

Throughout the main text, we focus on the case $\gamma<1$ (so the elasticity of substitution between consumption and housing $1/\gamma$ exceeds 1) and defer the analysis of the case $\gamma\ge 1$ to Appendix \ref{sec:gamma>=1}. There are three reasons for doing so. First, $\gamma<1$ is the empirically relevant case (Footnote \ref{fn:elasticity}). Second, $\gamma=1$ is a knife-edge case. Third, as we show in Proposition \ref{prop:gamma>1}, the equilibrium with $\gamma>1$ is pathological and counterfactual: the young asymptotically spend all income on housing (purchase and rent); the price-rent ratio converges to zero; and the gross risk-free rate diverges to infinity. Hence the case $\gamma>1$ is economically irrelevant.

Since by Assumption \ref{asmp:U}\ref{item:U_c} $c$ is homogeneous of degree 1 and quasi-concave, Theorem 11.14 of \citet[p.~158]{TodaEME} implies that $c$ is actually concave. Because we wish to study smooth interior solutions, we further strengthen the assumption on utility as follows.

\begin{asmp}[Composite consumption]\label{asmp:c}
The composite consumption $c:\R_{++}^2\to (0,\infty)$ is homogeneous of degree 1, twice continuously differentiable, and satisfies $c_y>0$, $c_z>0$, $c_{yy}<0$, $c_{zz}<0$, $c_y(0,z)=\infty$, $c_z(y,0)=\infty$.
\end{asmp}

A typical functional form for $c$ satisfying Assumption \ref{asmp:c} is the constant elasticity of substitution (CES) specification
\begin{equation}
    c(y,z)=\begin{cases*}
        ((1-\beta)y^{1-\sigma}+\beta z^{1-\sigma})^\frac{1}{1-\sigma} & if $0<\sigma\neq 1$,\\
        y^{1-\beta}z^\beta & if $\sigma=1$,
        \end{cases*} \label{eq:CES}
\end{equation}
where $1/\sigma$ is the elasticity of intertemporal substitution and $\beta\in (0,1)$ dictates time preference.

\section{Housing prices in the long run}\label{sec:longrun}

In this section, we study the long-run behavior of equilibrium housing prices.

\subsection{Long-run properties of equilibria}

We present two results that are crucial for the subsequent analysis.

\begin{lem}[Backward induction]\label{lem:backward}
    Suppose Assumptions \ref{asmp:U} and \ref{asmp:c} hold. If $\cS_T=\set{S_t}_{t=T}^\infty$ is an equilibrium starting at $t=T$, there exists a unique equilibrium $\cS_0=\set{S_t}_{t=0}^\infty$ starting at $t=0$ that agrees with $\cS_T$ for $t\ge T$.
\end{lem}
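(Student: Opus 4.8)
The plan is to reduce the lemma to a single backward step and then iterate it a finite number of times. Concretely, I would first establish the following: for any $S_{t+1}>0$ and any endowments $e_t^y,e_{t+1}^o>0$, there is exactly one $S_t\in(0,e_t^y)$ for which the difference equation \eqref{eq:s_dynamics} holds at date $t$. Granting this, I obtain $\cS_0$ by applying the backward step successively for $t=T-1,T-2,\dotsc,0$ to produce $\set{S_t}_{t=0}^{T-1}$ and concatenating it with $\cS_T$. Every term of the resulting sequence satisfies $0<S_t<e_t^y$ and \eqref{eq:s_dynamics} holds at every date, so Theorem \ref{thm:eq}(ii) certifies that $\cS_0$ is an equilibrium; conversely any equilibrium agreeing with $\cS_T$ for $t\ge T$ has $S_{T-1}$ pinned down by \eqref{eq:s_dynamics} with $0<S_{T-1}<e_{T-1}^y$, hence it coincides with the output of the backward step, and inductively the whole of $\cS_0$ is forced. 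So all the content lies in the single backward step.

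For that step I would substitute Assumption \ref{asmp:U} into \eqref{eq:s_dynamics}. Holding $z\coloneqq e_{t+1}^o+S_{t+1}$ fixed, treating $y\coloneqq e_t^y-S_t\in(0,e_t^y)$ as the unknown, and using $U_y=u'(c)c_y$, $U_z=u'(c)c_z$, $U_h=v'(1)$, the equation becomes $g(y)=v'(1)$ where
\[
g(y)\coloneqq u'\!\big(c(y,z)\big)\big[(e_t^y-y)\,c_y(y,z)-S_{t+1}\,c_z(y,z)\big].
\]
By Assumption \ref{asmp:c} and smoothness of $u$, $g$ is continuous on $(0,e_t^y)$. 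As $y\to(e_t^y)^-$ the coefficient $e_t^y-y$ of $c_y$ vanishes while the remaining quantities stay finite and positive, so $g(y)\to-u'\!\big(c(e_t^y,z)\big)S_{t+1}c_z(e_t^y,z)<0<v'(1)$. As $y\to0^+$ I claim $g(y)\to+\infty$: the Inada condition $c_y(0,z)=\infty$ and $u'(c)=c^{-\gamma}$ with $c$ bounded (it is increasing in $y$) give $u'(c)c_y\to\infty$, while Euler's identity $yc_y+zc_z=c$ together with concavity forces $yc_y$ to remain bounded and $zc_z\to c(0^+,z)$ finite, so the first bracketed term dominates the second; a short separate argument via $c_z/c_y\to0$ handles the Cobb--Douglas-type case $c(0^+,z)=0$. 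The intermediate value theorem then yields a solution $y^*\in(0,e_t^y)$.

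Uniqueness follows by showing $g'(y)<0$ at every root. Differentiating,
\[
g'(y)=u''(c)\,c_y\big[(e_t^y-y)c_y-S_{t+1}c_z\big]+u'(c)\big[-c_y+(e_t^y-y)c_{yy}-S_{t+1}c_{yz}\big],
\]
and differentiating $yc_y+zc_z=c$ in $y$ gives $yc_{yy}+zc_{yz}=0$, so $c_{yz}=-yc_{yy}/z>0$ because $c_{yy}<0$. Hence the bracket multiplying $u'(c)$ is strictly negative (already the term $-c_y$ is), and at a root one has $(e_t^y-y)c_y-S_{t+1}c_z=v'(1)/u'(c)>0$, which with $u''<0$ makes the first term strictly negative as well; thus $g'<0$ wherever $g=v'(1)$. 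This forces at most one root: the smallest root is a strict downcrossing, after which $g-v'(1)$ stays negative up to the next root, where it would have to be an upcrossing with $g'\ge0$ — a contradiction. This settles the backward step and hence the lemma. The main obstacle is the boundary analysis of $g$ near $y=0$, where the composite consumption $c$ may itself vanish; the degree-$1$ homogeneity of $c$ (via Euler's theorem) is the tool that keeps that limit under control.
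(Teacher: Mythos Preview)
Your argument is correct and follows essentially the same route as the paper: reduce to a single backward step, obtain existence via the intermediate value theorem from the two boundary limits, and obtain uniqueness from a monotonicity computation. The only cosmetic difference is that the paper parametrizes by $S$ and works with $f(S)=S_Tc_z-Sc_y+mc^\gamma$, whose derivative $f'(S)=-S_Tc_{yz}-c_y+Sc_{yy}-m\gamma c^{\gamma-1}c_y$ has every term negative, giving \emph{global} strict monotonicity directly rather than only $g'<0$ at roots.
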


Lemma \ref{lem:backward} shows that once we establish the existence of equilibrium starting at $t=T$, we may uniquely extend the equilibrium path backward in time, which allows us to focus on the long-run behavior of the economy and guarantees the uniqueness of the transitional dynamics. Since by Assumption \ref{asmp:G} the endowments eventually grow at a constant rate $G$, unless otherwise stated, without loss of generality we assume that endowments are $(e_t^y,e_t^o)=(e_1G^t,e_2G^t)$ for all $t$.

In our model, housing rents are endogenous, unlike the setting in \citet{HiranoToda2025JPE}. To apply the Bubble Necessity Theorem (Lemma \ref{lem:neccesity}), we need to characterize the long-run rent growth rate. The following theorem, which is the main technical contribution of this paper, exactly achieves this.

\begin{thm}[Long-run rent growth]\label{thm:Gr}
Suppose Assumptions \ref{asmp:G}--\ref{asmp:c} hold and $\gamma<1$. Then in any equilibrium, the long-run rent growth rate is
\begin{equation}
    G_r\coloneqq \limsup_{t\to\infty}r_t^{1/t}=G^\gamma. \label{eq:Gr}
\end{equation}
\end{thm}

The intuition for Theorem \ref{thm:Gr} is the following. Since endowments grow at rate $G$ and the elasticity of substitution between consumption and housing service is $1/\gamma$, the marginal rate of substitution (which equals rent) must grow at rate $G^\gamma$. Of course, the proof is not straightforward because Theorem \ref{thm:Gr} refers to \emph{any} equilibrium.\footnote{For readers that are curious but do not wish to read the proof, we provide a brief explanation. We first prove an intermediate result $\liminf G^{-t}S_t<e_1$, implying that the young's saving rate is strictly positive. To prove this by contradiction, assume $\liminf G^{-t}S_t\ge e_1$ and hence $G^{-t}S_t\to e_1$ (because $G^{-t}S_t\le e_1$ necessarily by the budget constraint). Then we can show that the equilibrium is \emph{simultaneously} bubbly and fundamental, which is impossible. The rest of the proof is straightforward.}

We next define the long-run equilibrium. By Assumption \ref{asmp:U}, the equilibrium dynamics \eqref{eq:s_dynamics} becomes
\begin{equation}
    S_{t+1}c_z=S_tc_y-m c^\gamma, \label{eq:s_dynamics2}
\end{equation}
where $c,c_y,c_z$ are evaluated at $(y,z)=(e_t^y-S_t,e_{t+1}^o+S_{t+1})$. To study asymptotically balanced growth paths, let $s_t\coloneqq S_t/e_t^y=S_t/(e_1G^t)$ be the housing expenditure normalized by the income of the young. Since $c$ is homogeneous of degree 1, its partial derivatives $c_y,c_z$ are homogeneous of degree 0. Therefore, dividing both sides of \eqref{eq:s_dynamics2} by $e_1G^t$, we obtain
\begin{equation}
    Gs_{t+1}c_z=s_tc_y-me_1^{\gamma-1}G^{(\gamma-1)t}c^\gamma, \label{eq:s_dynamics3}
\end{equation}
where $c,c_y,c_z$ are evaluated at $(y,z)=(1-s_t,G(w+s_{t+1}))$ for the old to young income ratio $w\coloneqq e_2/e_1$.

When $\gamma<1$, the difference equation \eqref{eq:s_dynamics3} explicitly depends on time $t$ (is non-autonomous), which is inconvenient for analysis. To convert it to an autonomous system, define the auxiliary variable $\xi_t=(\xi_{1t},\xi_{2t})$ by $\xi_{1t}=s_t=S_t/(e_1G^t)$ and $\xi_{2t}=e_1^{\gamma-1}G^{(\gamma-1)t}$. Then the one-dimensional non-autonomous nonlinear difference equation \eqref{eq:s_dynamics3} reduces to the two-dimensional autonomous nonlinear difference equation $\Phi(\xi_t,\xi_{t+1})=0$, where
\begin{subequations}\label{eq:H}
    \begin{align}
        \Phi_1(\xi,\eta)&=G\eta_1c_z-\xi_1 c_y+mc^\gamma \xi_2,\\
        \Phi_2(\xi,\eta)&=\eta_2-G^{\gamma-1}\xi_2
    \end{align}
\end{subequations}
and $c,c_y,c_z$ are evaluated at $(y,z)=(1-\xi_1,G(w+\eta_1))$ with $w\coloneqq e_2/e_1$. We can now define a long-run equilibrium.

\begin{defn}\label{defn:lreq}
A rational expectations equilibrium $\set{S_t}_{t=0}^\infty$ is a \emph{long-run equilibrium} if the sequence of auxiliary variables $\set{\xi_t}_{t=0}^\infty$ is convergent.
\end{defn}

If $\xi_t\to \xi$, since $G>1$ and $\gamma\in (0,1)$, we have $\Phi(\xi,\xi)=0$ if and only if $\xi_2=0$ and $\xi_1(Gc_z-c_y)=0$, where $c_y,c_z$ are evaluated at $(y,z)=(1-\xi_1,G(w+\xi_1))$. Clearly $\xi_f^*\coloneqq (0,0)$ is a steady state of $\Phi$, which we refer to as the \emph{fundamental} steady state.\footnote{The terminology ``steady state'' is subtle. The steady state $\xi^*$ corresponds to the detrended system $\Phi(\xi_t,\xi_{t+1})=0$, not the original economy. The long-run equilibrium in the original economy is an asymptotically balanced growth path that corresponds to a particular sequence $\set{\xi_t}_{t=0}^\infty\subset \R_{++}^2$ converging to $\xi^*$ that is consistent with the initial conditions and the equilibrium condition $\Phi(\xi_t,\xi_{t+1})=0$.} In order for $\Phi$ to have a nontrivial ($\xi_1=s>0$) steady state, which we refer to as the \emph{bubbly} steady state, it is necessary and sufficient that $Gc_z-c_y=0$.

\subsection{(Non)existence of fundamental equilibria}

As a benchmark, we start our analysis with the existence, and possibly nonexistence, of fundamental equilibria. By Theorem \ref{thm:Gr}, the rent must asymptotically grow at rate $G^\gamma$. Hence if the housing price equals its fundamental value (present value of rents), it must also grow at rate $G^\gamma$. But since endowments grow faster at rate $G>G^\gamma$, the expenditure share of housing converges to zero in the long run and the consumption allocation becomes autarkic: $(c_t^y,c_t^o)\sim (e_1G^t,e_2G^t)$. This argument suggests that in any fundamental equilibrium, the interest rate behaves like
\begin{equation}
    R_t=\frac{c_y}{c_z}(c_t^y,c_{t+1}^o)\sim \frac{c_y}{c_z}(e_1G^t,e_2G^{t+1})=\frac{c_y}{c_z}(1,Gw), \label{eq:Rf}
\end{equation}
where $w\coloneqq e_2/e_1$ is the old to young income ratio and we have used the homogeneity of $c$ (Assumption \ref{asmp:U}\ref{item:U_c}). Obviously, for the fundamental value of housing to be finite, the interest rate cannot fall below the rent growth rate $G^\gamma$ in the long run. This heuristic argument motivates the following (non)existence result.

\begin{thm}[(Non)existence of fundamental equilibria]\label{thm:gamma<1f}
Suppose Assumptions \ref{asmp:G}--\ref{asmp:c} hold, $\gamma<1$, and let $w=e_2/e_1$. Then the following statements are true.
\begin{enumerate}
    \item\label{item:f_w*f} There exists a unique $w_f^*>0$ satisfying 
    \begin{equation}
        \frac{c_y}{c_z}(1,Gw_f^*)=G^\gamma. \label{eq:w*f}
    \end{equation}
    \item\label{item:f_exist} If $w>w_f^*$, there exists a fundamental long-run equilibrium. The equilibrium objects have the order of magnitude
    \begin{subequations}\label{eq:eqobj_gamma<1f}
        \begin{align}
            (c_t^y,c_t^o)&\sim (e_1G^t,e_2G^t), \label{eq:yz<1f}\\
            (P_t,r_t)&\sim \left(me_1^\gamma\frac{G^\gamma c_z}{c_y-G^\gamma c_z}\frac{c^\gamma}{c_y}G^{\gamma t},me_1^\gamma \frac{c^\gamma}{c_y}G^{\gamma t}\right), \label{eq:Pr<1f}\\
            R_t&\sim \frac{c_y}{c_z}>G^\gamma, \label{eq:R<1f}
        \end{align}
    \end{subequations}
    where $c,c_y,c_z$ are evaluated at $(y,z)=(1,Gw)$.
    \item\label{item:f_nonexist} If $w<w_f^*$, there exist no fundamental equilibria. All equilibria are bubbly with $\liminf_{t\to\infty} G^{-t}P_t>0$.
\end{enumerate}
\end{thm}

Although the conclusion that fundamental equilibria may fail to exist (unlike in pure bubble models, in which fundamental equilibria always exist) is surprising, its intuition is actually straightforward. As discussed above, in any fundamental equilibrium, the consumption allocation is asymptotically autarkic and the interest rate is pinned down as the marginal rate of intertemporal substitution evaluated at the autarkic allocation. Hence the order of magnitude \eqref{eq:eqobj_gamma<1f} immediately follows from the general analysis in Theorem \ref{thm:eq}. Because both the housing price and rent grow at rate $G^\gamma$, the interest rate (which equals the return on housing by no-arbitrage) must exceed $G^\gamma$ as in \eqref{eq:R<1f}. Hence, the no-bubble condition holds and the housing price just reflects the fundamentals. As the young to old income ratio $1/w=e_1/e_2$ rises, the autarkic interest rate falls. But it cannot fall below the rent growth rate $G^\gamma$, for otherwise the fundamental value would become infinite, which is impossible in equilibrium. Therefore, there cannot be any fundamental equilibria if the young are sufficiently rich. The threshold for the nonexistence of fundamental equilibria is determined by equating the marginal rate of intertemporal substitution to the rent growth rate $G^\gamma$, which is precisely the condition \eqref{eq:w*f}.

It is important to recognize the differences in statements \ref{item:f_exist} and \ref{item:f_nonexist}. All statement \ref{item:f_exist} claims is that there exists a fundamental long-run equilibrium satisfying the order of magnitude \eqref{eq:eqobj_gamma<1f}. It does not rule out the possibility that there are other equilibria that are potentially cyclic or chaotic. In contrast, statement \ref{item:f_nonexist} is much stronger. Under the condition $w<w_f^*$, it claims that no fundamental equilibria can exist at all, regardless of the asymptotic behavior such as convergent, cyclic, or chaotic. The proof of Theorem \ref{thm:gamma<1f}\ref{item:f_nonexist} is an application of the Bubble Necessity Theorem (Lemma \ref{lem:neccesity}), where the long-run rent growth rate established in Theorem \ref{thm:Gr} plays a crucial role.

\subsection{Existence of bubbly equilibria}

Theorem \ref{thm:gamma<1f} establishes a necessary and sufficient condition for the existence of a fundamental equilibrium. In particular, if the young are sufficiently rich and $w<w_f^*$, fundamental equilibria do not exist and hence bubbles are inevitable. The following theorem provides a necessary and sufficient condition for the existence of a bubbly long-run equilibrium.

\begin{thm}[Existence of bubbly long-run equilibrium]\label{thm:gamma<1b}
Suppose Assumptions \ref{asmp:G}--\ref{asmp:c} hold, $\gamma<1$, and let $w=e_2/e_1$. Then the following statements are true.
\begin{enumerate}
    \item\label{item:steady<1b} There exists a unique $w_b^*>w_f^*$ satisfying 
    \begin{equation}
        \frac{c_y}{c_z}(1,Gw_b^*)=G, \label{eq:w*b}
    \end{equation}
    which depends only on $G$ and $c$. A bubbly steady state of the system \eqref{eq:H} exists if and only if $w<w_b^*$, which is uniquely given by $\xi_b^*=(s^*,0)$ with $s^*=\frac{w_b^*-w}{w_b^*+1}$.
    \item\label{item:order<1b} For generic $G>1$ and $w<w_b^*$, there exists a bubbly long-run equilibrium. The equilibrium objects have the order of magnitude
    \begin{subequations}\label{eq:eqobj_gamma<1}
        \begin{align}
            (c_t^y,c_t^o)&\sim ((1-s^*)e_1G^t,(w+s^*)e_1G^t), \label{eq:yz<1b}\\
            (P_t,r_t)&\sim \left(s^*e_1G^t, me_1^\gamma \frac{c^\gamma}{c_y}G^{\gamma t}\right), \label{eq:Pr<1b}\\
            R_t&\sim G, \label{eq:R<1b}
        \end{align}
    \end{subequations}
    where $c,c_y$ are evaluated at $(y,z)=(1-s^*,G(w+s^*))$.
\end{enumerate}
\end{thm}

We explain the intuition for the following points:
\begin{enumerate*}
    \item\label{item:R=G} Why does the bubbly equilibrium interest rate $R$ equal the economic growth rate $G$?
    \item\label{item:w*b} Why do the young need to be sufficiently rich for the emergence of bubbles?
    \item\label{item:gamma<1} Why is the condition $\gamma<1$ important for the emergence of bubbles?
\end{enumerate*}
The intuition for \ref{item:R=G} is the following. In order for a housing bubble to exist in the long run, housing price must asymptotically grow at the same rate $G$ as the economy as in \eqref{eq:Pr<1b}: clearly housing price cannot grow faster than $G$ (otherwise the young cannot afford housing); if it grows at a lower rate than $G$, housing becomes asymptotically irrelevant. Because housing price grows at rate $G$ but the rent grows at rate $G^\gamma<G$, the interest rate \eqref{eq:R} must converge to $G$ as in \eqref{eq:R<1b}. The intuition for \ref{item:w*b} is the following. With bubbles, we know $R=G$. Because the young are saving through the purchase of housing, the lowest possible interest rate in the economy is the autarkic interest rate. Therefore, for the emergence of bubbles, the autarkic interest rate must be lower than the economic growth rate, or equivalently the young must be sufficiently rich. The condition \eqref{eq:w*b}, which equates the marginal rate of intertemporal substitution to the growth rate (long-run interest rate), determines the income ratio threshold for which such a situation is possible. The intuition for \ref{item:gamma<1} is the following. With bubbles, we know $R=G$ and the housing price grows at the same rate. Then the no-arbitrage condition \eqref{eq:R} forces the rents relative to the prices to be negligible (grow slower), for otherwise the interest rate will exceed the housing price growth rate and there will be no bubbles. Thus for the emergence of bubbles, we need $G>G^\gamma$ and hence $\gamma<1$.

In this bubbly equilibrium, the housing expenditure $S_t$ and rent $r_t$ asymptotically grow at rates $G$ and $G^\gamma<G$, respectively. On the other hand, since the gross risk-free rate \eqref{eq:R<1b} converges to $G$ and the rent grows at rate $G^\gamma<G$, the present value of rents---the fundamental value of housing $V_t$ in \eqref{eq:Vt}---is finite and grows at rate $G^\gamma$. Then the ratio $S_t/V_t$ grows at rate $G^{1-\gamma}>1$, so the housing price eventually exceeds the fundamental value and there is a bubble. Moreover, from a backward induction argument, we will have housing bubbles at all dates.

In the bubbly equilibrium, the housing price grows faster than the rent and is disconnected from fundamentals in the sense that the housing price is asymptotically independent of the preferences for housing. To see this, note that the threshold $w_b^*$ in \eqref{eq:w*b} depends only on the growth rate $G$ and the utility of consumption $c$. Then the steady state $s^*$ depends only on $G$, $c$, and incomes $(e_1,e_2)$, and so does the asymptotic housing price in \eqref{eq:Pr<1b}. In particular, the housing price is asymptotically independent of the marginal utility of housing $m$ as well as the elasticity of substitution $1/\gamma$ between consumption and housing. In contrast, the rent in \eqref{eq:Pr<1b} does depend on these parameters.

\subsection{Uniqueness of equilibria}

Although it is natural to focus on equilibria converging to steady states (\ie, long-run equilibria), there may be other equilibria. In general, an equilibrium is called \emph{locally determinate} if there are no other equilibria in a neighborhood of the given equilibrium. If a model does not make determinate predictions, its value as a tool for economic analysis is severely limited \citep{KehoeLevine1985}. Therefore, local determinacy of equilibrium is crucial for applications.

It is well known that equilibria in Arrow-Debreu economies are generically locally determinate \citep{Debreu1970} but not necessarily so in OLG models \citep{Gale1973,GeanakoplosPolemarchakis1991}. In our context, local determinacy means that there are no other equilibria converging to the same steady state. However, we already know the uniqueness of steady states, and we also know that Lemma \ref{lem:backward} allows us to establish global properties of equilibrium. Thus in our model, local determinacy implies equilibrium uniqueness, which justifies comparative statics and dynamics.

The local determinacy of a dynamic general equilibrium model often depends on the elasticity of intertemporal substitution (EIS) defined by
\begin{equation}
    \varepsilon(y,z)=-\left(\frac{\diff \log(c_y/c_z)}{\diff \log (y/z)}\right)^{-1}; \label{eq:EIS_def}
\end{equation}
see the discussion in \citet{FlynnSchmidtToda2023TE}. When $c$ is homogeneous of degree 1, we can show that $\varepsilon=\frac{c_yc_z}{cc_{yz}}$ (Lemma \ref{lem:c}). The following proposition provides a sufficient condition for the uniqueness of equilibria.

\begin{prop}[Uniqueness of equilibria]\label{prop:unique}
Suppose Assumptions \ref{asmp:G}--\ref{asmp:c} hold and $\gamma<1$. Let $w=e_2/e_1$ and $w_f^*,w_b^*$ be as in \eqref{eq:w*f} and \eqref{eq:w*b}. Then the following statements are true.
\begin{enumerate}
    \item\label{item:locdet<1f} If $w>w_f^*$, there exists a unique fundamental long-run equilibrium.
    \item\label{item:locdet<1b} If $w<w_b^*$ and the elasticity of intertemporal substitution \eqref{eq:EIS_def} satisfies
    \begin{equation}
        \frac{1-w_b^*}{2}\frac{1-w/w_b^*}{1+w}<\varepsilon(y,z)\neq \frac{1-w/w_b^*}{1+w} \label{eq:locdet_gamma<1}
    \end{equation}
    at $(y,z)=(1-s^*,G(w+s^*))$ with $s^*=\frac{w_b^*-w}{w_b^*+1}$, then there exists a unique bubbly long-run equilibrium.
\end{enumerate}
\end{prop}

Theorem \ref{thm:gamma<1f}\ref{item:f_exist} shows that all fundamental long-run equilibria are asymptotically equivalent. Proposition \ref{prop:unique}\ref{item:locdet<1f} shows that the fundamental equilibrium is actually unique. The right-hand side of \eqref{eq:locdet_gamma<1} is less than 1 because $0<w<w_b^*$. Therefore, the left-hand side of \eqref{eq:locdet_gamma<1} is less than $1/2$. Proposition \ref{prop:unique}\ref{item:locdet<1b} thus states that the bubbly equilibrium in Theorem \ref{thm:gamma<1b}\ref{item:order<1b} is locally determinate as long as the elasticity of intertemporal substitution (EIS) is not too much below $1/2$.\footnote{In general equilibrium theory, it is well known that multiple equilibria are possible if the elasticity is low; see \citet{TodaWalsh2017ETB} for concrete examples and \citet{TodaWalsh2024JME} for a recent review.}

The intuition for Proposition \ref{prop:unique} is as follows. Whether the bubbly equilibrium is locally determinate or not depends on the stability of linearized system around the steady state $\xi_b^*$. It turns out that one eigenvalue is $\lambda_2\coloneqq G^{\gamma-1}<1$, which is stable. The other eigenvalue $\lambda_1$ could be greater than 1 in modulus (unstable) or less (stable), depending on the model parameters. We find that as long as the EIS is not too much below $1/2$ (namely the left inequality of \eqref{eq:locdet_gamma<1} holds) and is distinct from the special value in the right-hand side of \eqref{eq:locdet_gamma<1} (in which case linearization is inapplicable due to a singularity), then $\abs{\lambda_1}>1$ (unstable). Since the dynamics has one endogenous initial condition (because $\xi_0=(s_0,e_1^{\gamma-1})$ and the initial young income $e_1$ is exogenous), the equilibrium is locally determinate: there exists a unique equilibrium path converging to the steady state if $e_1$ is large enough. Then the existence and uniqueness of equilibrium with arbitrary $e_1$ follows from the backward induction argument in Lemma \ref{lem:backward}. The same argument applies to the fundamental equilibrium, although in this case we have $\lambda_1>1$ regardless of the EIS.

\section{Possibility, necessity, and phase transition}\label{sec:phase}

Having established the existence and determinacy of equilibria, in this section we further develop the intuition, discuss expectation-driven housing bubbles, and present comparative dynamics exercises using a numerical example.

\subsection{Two-stage phase transition along economic development}\label{subsec:two-stage}

Theorems \ref{thm:gamma<1f} and \ref{thm:gamma<1b} imply that, as the young (more precisely, home buyers) become richer, the economy experiences \emph{two} phase transitions in the process of economic development, as illustrated in Figure \ref{fig:regime}, which shows how the elasticity of substitution between consumption and housing service $1/\gamma$ and young to old income ratio $1/w=e_1/e_2$ affect the equilibrium housing price regimes. (The case $1/\gamma\le 1$ is treated in Appendix \ref{sec:gamma>=1}.) We capture economic development with changes in the long-run income ratio of the young (home buyers) relative to the old (home sellers).

\begin{figure}[htb!]
\centering

\begin{tikzpicture}[scale = 2.8]

\draw[->] (-0.1,0) -- (4.03,0) node[right] {$1/\gamma$};
\draw (2,0) node[below] {Elasticity of substitution};
\draw[->] (0,-0.1) -- (0,2.5) node[above] {$1/w=e_1/e_2$};
\draw (0,0) node[below left] {$0$};

\fill[semitransparent,yellow] (0,0) rectangle (1,2.47);
\fill[semitransparent,red] (1,1) rectangle (4,2.47); 
\fill[nearly transparent,blue,domain = 1:4,variable = \x]
    (1,0)
    -- plot (\x,{1.5^(-1/\x + 1)})
    -- (4,0)
    -- cycle; 

\draw (1,0) -- (1,2.47);
\draw (1,0) node[below] {$1$};

\draw[domain = 1:4,dashed] plot (\x,{1.5^(-1/\x + 1)});
\draw (4,1.355) node[right] {$1/w_f^*$};

\draw[dashed] (1,1) -- (4,1);
\draw (4,1) node[right] {$1/w_b^*$};

\draw (0,2.47) node[below right] {(Pathological)};
\draw (0,5/4) node[right] {Fundamental};
\draw (0,0) node[above right] {(Prop.~\ref{prop:gamma>1})};
\draw[->] (1.1,2.6) -- (1.01,2.51);
\draw (1.1,2.47) node[above right] {Fundamental (Prop.~\ref{prop:gamma=1})};
\draw (4,0) node[above left] {Fundamental regime};
\draw (1,0) node[above right] {(Thm.~\ref{thm:gamma<1f})};
\draw (4,1) node[above left] {Bubble possibility regime};
\draw (4,2.47) node[below left] {Bubble necessity regime};
\draw (1,2.47) node[below right] {(Thm.~\ref{thm:gamma<1b})};

\draw (0,5/4) node[left] {Young};
\draw[->] (-0.2,1.35) -- (-0.2,1.5) node[above] {rich};
\draw[->] (-0.2,1.15) -- (-0.2,1) node[below] {poor};
    
\end{tikzpicture}

\caption{Phase transition of equilibrium housing price regimes.}\label{fig:regime}

\caption*{\footnotesize Note: $1/w=e_1/e_2$ is the young to old income ratio and $w_f^*,w_b^*$ are the thresholds for the bubble necessity and possibility regimes defined by \eqref{eq:w*f} and \eqref{eq:w*b}, respectively. The figure corresponds to the CES utility \eqref{eq:CES} with $\beta=1/2$, $\sigma=1$, and $G=1.5$.}

\end{figure}

When the young to old income ratio $1/w=e_1/e_2$ is below the bubbly equilibrium threshold $1/w_b^*$, the young do not have sufficient purchasing power to drive up the housing price and only fundamental equilibria exist (Theorem \ref{thm:gamma<1f}\ref{item:f_exist}). In this fundamental regime, the housing price grows at rate $G^\gamma$, which is lower than both the interest rate $R$ and the economic growth rate $G$. In the long run, the expenditure share of housing converges to zero and the consumption allocation becomes autarkic (see \eqref{eq:yz<1f}).

When the income ratio of the young exceeds the first critical value $1/w_b^*$, the economy transitions to the bubble possibility regime in which fundamental and bubbly equilibria coexist (Theorem \ref{thm:gamma<1b}). In this regime, although each equilibria are determinate, which equilibrium will be selected depends on agents' expectations.

When the income ratio of the young exceeds the second and still higher critical value $1/w_f^*$, fundamental equilibria cease to exist and all equilibria become bubbly (Theorem \ref{thm:gamma<1f}\ref{item:f_nonexist}). Bubbles are necessary for the existence of equilibrium and the bubble necessity regime emerges. In this regime, the housing price is asymptotically determined only by the economic growth rate $G$ and the preference for consumption goods $c$, and thus the housing price inevitably becomes disconnected from fundamentals.

The intuition for the necessity of housing bubbles when the young are sufficiently rich is the following. As discussed above, in any fundamental equilibrium, the expenditure share of housing converges to zero and the consumption allocation becomes autarkic. However, as the young get richer (the young to old income ratio $1/w$ increases), the interest rate $R=(c_y/c_z)(1,Gw)$ falls (Figure \ref{fig:interest}). If $R$ gets lower than a critical value, the economy enters the bubble possibility regime. Hence, housing bubbles driven by optimistic expectations may be possible. As the income ratio increases further, the fundamental equilibrium interest rate becomes lower than the rent growth rate $G^\gamma$. If the economy enters that situation, the only possible equilibrium is one that features a housing bubble.

\begin{figure}[htb!]
\centering

\begin{tikzpicture}[scale = 2.8]

\draw[->] (-0.1,0) -- (4.03,0) node[below] {$1/w=e_1/e_2$};
\draw[->] (0,-0.1) -- (0,2.5) node[left] {$R$};
\draw (0,0) node[below left] {$0$};

\fill[semitransparent,green] (0,1.5) rectangle (4,2.47);
\fill[semitransparent,lightgray] (0,0) rectangle (4,1.5);

\draw[dashed] (1,0) node[below] {$1/w_b^*$} -- (1,2.47);
\draw[dashed] (2,0) node[below] {$1/w_f^*$} -- (2,2.47);


\draw (4,0) node[above left] {poor $\leftarrow$ Young $\rightarrow$ rich};

\draw[very thick,red] (1,1.5) -- (4,1.5);
\draw (2,1.5) node[above right,text=red] {Bubbly equilibrium};
\draw[dashed] (0,1.5) -- (1,1.5);
\draw (0,1.5) node[left] {$G$};

\draw[domain = 0.607:2,very thick,blue] plot (\x,{1.5/\x});
\draw (2,0.75) node[right,text=blue] {Fundamental equilibrium};
\draw[dashed] (0,0.75) -- (2,0.75);
\draw (0,0.75) node[left] {$G^\gamma$};

\draw [decorate,
    decoration = {brace}, thick] (0.05,2.5) --  (0.95,2.5)
    node[midway,above]{Fundamental};
\draw [decorate,
    decoration = {brace}, thick] (1.05,2.5) --  (1.95,2.5)
    node[midway,above]{Possibility};
\draw [decorate,
    decoration = {brace}, thick] (2.05,2.5) --  (3.95,2.5)
    node[midway,above]{Bubble necessity};

\end{tikzpicture}
\caption{Housing price regimes and equilibrium interest rate.}\label{fig:interest}

\caption*{\footnotesize Note: see Figure \ref{fig:regime} for explanation of parameters.}
\end{figure}

Furthermore, we emphasize that once the state of the economy changes to the housing bubble economy, whether by expectations or by necessity, the determination of housing prices becomes purely demand-driven: the housing price continues to rise due to sustained demand growth arising from income growth of the young (home buyers). In contrast, when housing prices reflect fundamentals, it equals the present value of housing rents and hence its determination is supply-driven. The demand-driven housing price dynamics is a distinctive feature of the housing bubble economy.

We would like to add an important remark concerning the knife-edge case with $\gamma=1$, \ie, the Cobb-Douglas case, which is often employed in housing models or macroeconomic analyses \citep{Kocherlakota2009,ArceLopez-Salido2011}. When $\gamma=1$, steady-state (balanced) growth emerges, in which case housing rents and prices grow at the same rate and therefore housing bubbles are impossible. This result has critically important implications for the method of macroeconomic modeling. As long as we construct a model so that only steady-state growth with stationarity emerges, by model construction, housing bubbles can never occur. What our analyses show is that once we deviate from the knife-edge restriction,\footnote{As is well known as the ``Uzawa steady-state (balanced) growth theorem'' \citep{Uzawa1961}, any growth model that produces a balanced growth path is a knife-edge theory. Indeed, \citet*[p.~1306]{GrossmanHelpmanOberfieldSampson2017} clearly note ``As with any model that generates balanced growth, knife-edge restrictions are required to maintain the balance''.} asset pricing implications become markedly different. This implies that the essence of housing bubbles is nonstationarity. (See also the introduction and concluding remarks in \citet{HiranoToda2024JME}.)

\subsection{Expectation-driven housing bubbles along economic development} \label{subsec:expectation}

We illustrate the preceding analysis and the role of expectations with a numerical example. Suppose the composite consumption takes the CES form \eqref{eq:CES}. A straightforward calculation yields
\begin{equation}
    c_y=(1-\beta)(y/c)^{-\sigma} \quad \text{and} \quad c_z=\beta (z/c)^{-\sigma}. \label{eq:CES_grad}
\end{equation}
Using \eqref{eq:w*f}, \eqref{eq:w*b}, and \eqref{eq:CES_grad}, we can solve for the critical values for the existence of fundamental and bubbly equilibria as
\begin{subequations}
\begin{align}
    \frac{1-\beta}{\beta}(Gw_f^*)^\sigma=G^\gamma &\iff w_f^*=\left(\frac{\beta}{1-\beta}G^{\gamma-\sigma}\right)^{1/\sigma}, \label{eq:w*f_CES} \\
    \frac{1-\beta}{\beta}(Gw_b^*)^\sigma=G &\iff w_b^*=\left(\frac{\beta}{1-\beta}G^{1-\sigma}\right)^{1/\sigma}. \label{eq:w*b_CES}
\end{align}
\end{subequations}
Substituting \eqref{eq:CES_grad} into \eqref{eq:s_dynamics2}, we obtain
\begin{equation}
    \beta S_{t+1}z^{-\sigma}=(1-\beta)S_ty^{-\sigma}-m c^{\gamma-\sigma}, \label{eq:s_dynamics_CES}
\end{equation}
where $(y,z)=(e_t^y-S_t,e_{t+1}^o+S_{t+1})$. To solve for the equilibrium numerically, we can take a large enough $T$, set $S_T=s^*e_T^y$ with steady state value $s^*$ defined by
\begin{equation*}
    s^*=\begin{cases*}
        0 & if fundamental equilibrium,\\
        \frac{w_b^*-w}{w_b^*+1} & if bubbly equilibrium,
    \end{cases*}
\end{equation*}
and solve the nonlinear equation \eqref{eq:s_dynamics_CES} backwards for $S_{T-1},\dots,S_0$. Note that the backward calculations of $\set{S_t}_{t=0}^T$ are always possible by Lemma \ref{lem:backward}.

As a numerical example, we set $\beta=1/2$, $\sigma=1$, $\gamma=1/2$, $m=0.1$, and $G=1.1$. The income ratio threshold for the bubble possibility regime \eqref{eq:w*b_CES} is then $w_b^*=1$. Figure \ref{fig:dynamics_f} shows the equilibrium housing price dynamics when $(e_1,e_2)=(95,105)$ so that $e_2/e_1>w_b^*$ and hence only a fundamental equilibrium exists. The housing price and rent asymptotically grow at the same rate $G^\gamma$, which is lower than the endowment growth rate $G$. Furthermore, the distance in semilog scale between the housing price and rent converges, suggesting that the price-rent ratio converges. These observations are consistent with Theorem \ref{thm:gamma<1f}.

\begin{figure}[htb!]
    \centering
    \begin{subfigure}{0.48\linewidth}
    \includegraphics[width=\linewidth]{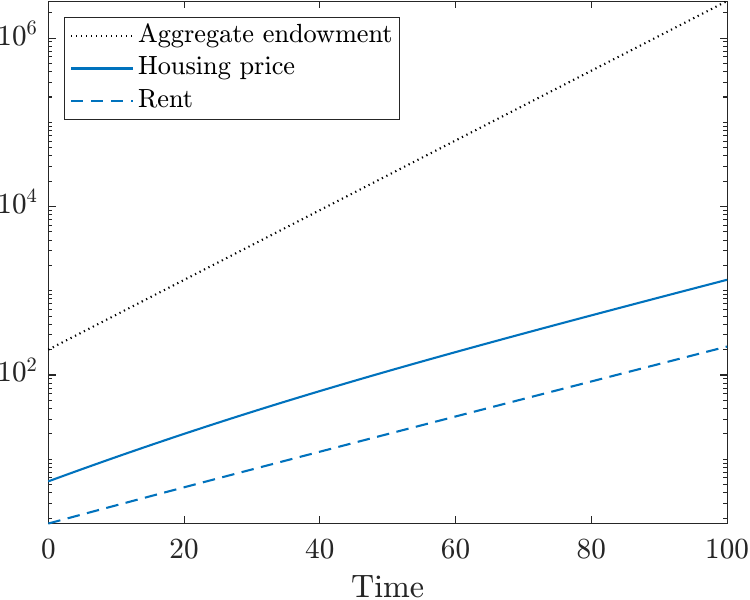}
    \caption{Fundamental equilibrium.}\label{fig:dynamics_f}
    \end{subfigure}
    \begin{subfigure}{0.48\linewidth}
    \includegraphics[width=\linewidth]{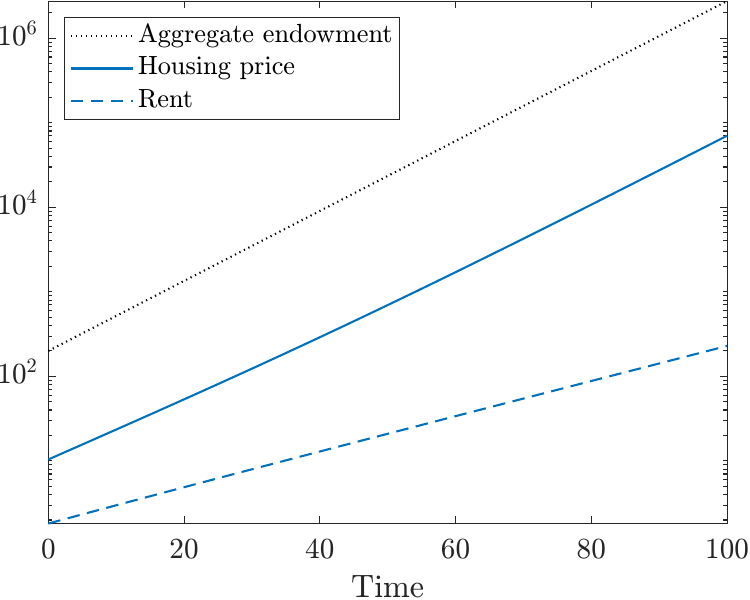}
    \caption{Bubbly equilibrium.}\label{fig:dynamics_b}
    \end{subfigure}
    \caption{Equilibrium housing price dynamics.}\label{fig:dynamics}
\end{figure}

Figure \ref{fig:dynamics_b} repeats the same exercise for $(e_1,e_2)=(105,95)$ so that $e_2/e_1<w_b^*$ and a bubbly equilibrium exists. The housing price asymptotically grows at the same rate as endowments, while the rent grows at a slower rate. Consequently, the price-rent ratio increases. These observations are consistent with Theorem \ref{thm:gamma<1b}.

We next study how expectations about economic development and incomes in the future affect the current housing price. In Figure \ref{fig:dynamics_fbf_I0}, we consider phase transitions between the fundamental and bubbly regimes. The economy starts with $(e_0^y,e_0^o)=(95,105)$ and agents believe that the endowments grow at rate $G$ and the income ratio $e_t^o/e_t^y$ is constant at $105/95$. At $t=40$, the income ratio $e_t^o/e_t^y$ unexpectedly changes to $95/105$ and agents believe that this new ratio will persist. Thus the economy takes off to the bubbly regime. Finally, at $t=80$ the income ratio $e_t^o/e_t^y$ unexpectedly reverts to the original value $105/95$. Note that as the economy enters the bubbly regime, rents are hardly affected but the housing price increases and grows at a faster rate, generating a housing bubble.

\begin{figure}[htb!]
    \centering
    \begin{subfigure}{0.48\linewidth}
    \includegraphics[width=\linewidth]{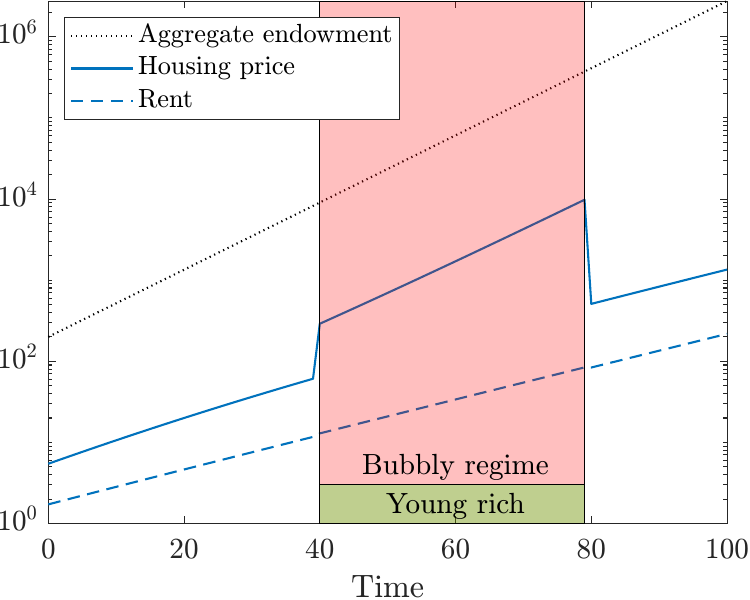}
    \caption{Unexpected income change.}\label{fig:dynamics_fbf_I0}
    \end{subfigure}
    \begin{subfigure}{0.48\linewidth}
    \includegraphics[width=\linewidth]{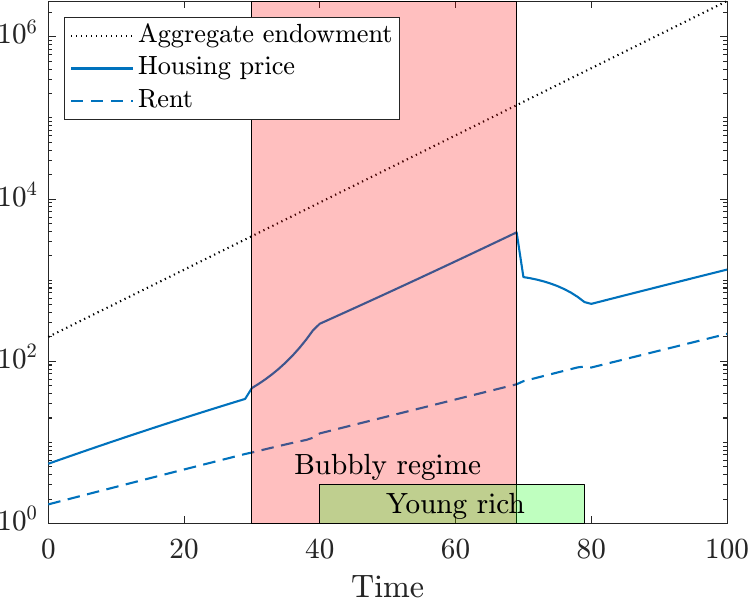}
    \caption{Expected income change.}\label{fig:dynamics_fbf_I10}
    \end{subfigure}
    \caption{Phase transition between fundamental and bubbly regimes.}\label{fig:dynamics_fbf}
\end{figure}

Figure \ref{fig:dynamics_fbf_I10} repeats the same exercise except that the income changes are anticipated. Specifically, agents learn at $t=30$ that the income ratio will change to $95/105$ (so the young will be relatively rich) starting at $t=40$ and will remain so forever. Similarly, agents learn at $t=70$ that the income ratio will revert to $105/95$ (so the young will be relatively poor) starting at $t=80$ and will remain so forever. In this case, the economy takes off to the bubbly regime at $t=30$ and reenters the fundamental regime at $t=70$ due to rational expectations. We can see that the housing price jumps up at $t=30$ and grows fast even before the fundamentals change. The housing price already contains a bubble, even if the current income of the young is relatively low and appears to be incapable of generating bubbles. This is due to a backward induction argument: if there is a bubble in the future (so \eqref{eq:TVC} holds with strict inequality and the no-bubble condition fails), there is a bubble in every period. Once the young become relatively rich at $t=40$, the housing price increases at the same rate as endowments, consistent with Theorem \ref{thm:gamma<1b}. The housing bubble collapses at $t=70$ when agents learn that the young will be relatively poor in the future, even though the young remain relatively rich until $t=80$. 

From this analysis, we can draw an interesting implication. During expectation-driven housing bubbles, housing prices grow faster than rents. The price-income ratio continues to rise and hence the dynamics may appear unsustainable. Moreover, the greater the time gap between when news of rising incomes arrives ($t=30$) and when incomes actually start to rise ($t=40$), the longer the duration of the seemingly unsustainable dynamics. This expectation-driven housing bubbles and their collapse may capture realistic transitional dynamics. For instance, \citet{MilesMonro2021} emphasize that the decline in the real interest rate has produced large effects on the evolution of housing prices in the U.K. In our model, the (real) interest rate is endogenously determined and is closely related to the income of home buyers. As their income rises and the interest rate falls below the rent growth rate, a housing bubble necessarily emerges. \citet{MankiwWeil1989} and \citet*{KiyotakiMichaelidesNikolov2011,KiyotakiMichaelidesNikolov2024} stress the importance of expectation formation of long-run aggregate income growth and the interest rate to account for the fluctuations in housing prices. Our expectation-driven housing bubbles and their collapse show that even small changes in incomes of home buyers or the expectation thereof could produce large swings in housing prices. A critical difference is that housing prices in their papers reflect fundamentals, while our main focus is to identify the economic conditions under which housing prices reflect fundamentals or contain bubbles and to study expectation-driven housing price bubbles.

\section{Discussion and extensions}\label{sec:extension}

\subsection{Multiple savings vehicles}\label{subsec:extension_multiple}

By Theorem \ref{thm:eq}, an equilibrium always exists. By Theorem \ref{thm:gamma<1f}\ref{item:f_nonexist}, fundamental equilibria do not exist when $w<w_f^*$. Therefore, under this condition all equilibria are bubbly. In our model, housing is the only financial asset. A natural question would be what happens with multiple savings vehicles. To address this issue, take any (bubbly) equilibrium with housing price $P_t=V_t+B_t$, where $V_t$ is the fundamental value \eqref{eq:Vt} and $B_t\coloneqq P_t-V_t\ge 0$ is the bubble component. By the definitions of the interest rate \eqref{eq:R} and the fundamental value \eqref{eq:Vt}, we obtain
\begin{align*}
    P_t&=\frac{1}{R_t}(P_{t+1}+r_{t+1}), & V_t&=\frac{1}{R_t}(V_{t+1}+r_{t+1}).
\end{align*}
Taking the difference, we obtain $B_{t+1}=R_tB_t$, so the bubble component grows at the rate of interest, as is well known. Now, take any $\theta\in [0,1]$ and define an alternative housing price by $P_t'=V_t+(1-\theta) B_t$. Then we can easily construct an equilibrium in which the housing price is $P_t'$ and there is an additional pure bubble asset (intrinsically worthless asset that pays no dividends) with market capitalization $\theta B_t$. This argument (which is the same as the ``bubble substitution'' argument in \citet[\S5]{Tirole1985}) shows that once there are multiple assets, the size of the bubble attached to each individual asset may become indeterminate (here parametrized by $\theta\in [0,1]$) because all assets are perfect substitute. However, the total size of the bubble is determinate (equal to $B_t$) and hence the consumption allocation as well as all macroeconomic implications are identical regardless of $\theta$.\footnote{Although the model is rather different, \citet*{HiranoJinnaiTodaLeverage} develop a macro-finance model in which there are multiple savings vehicles including capital, land, and bonds, and show that land price bubbles necessarily emerge when the financial leverage gets sufficiently high.} Hence, from a macroeconomic perspective, this indeterminacy is unimportant. This result is different from standard pure bubble models, where equilibria exhibit real indeterminacy \citep{Gale1973,HiranoToda2024EL}.

\subsection{Welfare implications}\label{subsec:extension_welfare}

In \S\ref{sec:longrun}, we saw that housing bubbles can or must emerge as the young get richer. A natural question is whether housing bubbles are socially desirable or not. In this section we discuss the welfare implications of housing bubbles.

Let $\set{(c_t^y,c_t^o,h_t)}_{t=0}^\infty$ be an arbitrary allocation with $c_t^y,c_t^o>0$ and $c_t^y+c_t^o=e_t^y+e_t^o$. Since only the young have preference for housing service, which is perishable, it is obviously efficient to assign all housing service to the young. Using Assumption \ref{asmp:U}, the utility of generation $t$ becomes $U(c_t^y,c_{t+1}^o,1)=u(c(c_t^y,c_{t+1}^o))+mu(1)$, which is a monotonic transformation of $c(c_t^y,c_{t+1}^o)$. Therefore, the welfare analysis (in terms of Pareto efficiency) reduces to that of an endowment economy without housing and with utility function $c(y,z)$ for goods.

Let $G_t=e_{t+1}^y/e_t^y$ be the growth rate of young income and $w_t=e_t^o/e_t^y$ be the old to young income ratio at time $t$. Let $s_t=1-c_t^y/e_t^y$ be the saving rate. Then the utility of generation $t$ becomes
\begin{equation*}
    c(c_t^y,c_{t+1}^o)=c(e_t^y(1-s_t),e_{t+1}^y(w_{t+1}+s_{t+1}))=e_t^yc(1-s_t,G_t(w_{t+1}+s_{t+1})),
\end{equation*}
which is a monotonic transformation of $c(1-s_t,G_t(w_{t+1}+s_{t+1}))$. This argument shows that the welfare analysis reduces to the case in which the time $t$ aggregate endowment is $1+w_t$, the utility function of generation $t$ is $u_t(y,z)\coloneqq c(y,G_tz)$, and the proposed allocation is $(y_t,z_t)=(1-s_t,w_t+s_t)$. Since Assumption \ref{asmp:G} implies that $w_t=e_t^o/e_t^y$ is constant for $t\ge T$, we can apply the characterization of Pareto efficiency in OLG models with bounded endowments provided by \citet{BalaskoShell1980}. We thus obtain the following proposition.

\begin{prop}[Characterization of equilibrium efficiency]\label{prop:efficient}
Suppose Assumptions \ref{asmp:G}--\ref{asmp:c} hold, $\gamma<1$, and let $w=e_2/e_1$. Then the following statements are true.
\begin{enumerate}
    \item\label{item:efficient_w>wb} If $w\ge w_b^*$, any equilibrium is efficient.
    \item\label{item:efficient_b} If $w<w_b^*$, any bubbly long-run equilibrium is efficient.
    \item\label{item:efficient_f} If $w<w_b^*$, any fundamental long-run equilibrium is inefficient.
\end{enumerate}
\end{prop}

Recalling that $w<w_b^*$ implies $R<G$ in the fundamental equilibrium (Figure \ref{fig:interest}), fundamental equilibria are inefficient whenever $R<G$. Therefore, in Figure \ref{fig:interest}, all equilibria in the green region (including the boundary) are efficient, whereas all equilibria in the gray region (excluding the boundary) are inefficient.

The intuition for the Pareto inefficiency of fundamental equilibria when $w<w_b^*$ is the following. In equilibrium, since endowments grow at rate $G$ and rents grow at rate $G^\gamma$ (Theorem \ref{thm:Gr}), if the housing price equals its fundamental value, it must also grow at rate $G^\gamma$. Since $G^\gamma<G$, the housing price is asymptotically negligible relative to endowments, so the equilibrium consumption becomes autarkic. Now when $w<w_b^*$, the young are richer, so the interest rate becomes so low that it is below the economic growth rate (see \eqref{eq:w*b}). Housing prices are too low to absorb savings desired by the young. In other words, housing is not serving as a means of savings with enough returns. In this situation if we consider a social contrivance such that for each large enough $t$ the young at time $t$ gives the old $\epsilon G^t$ of the good (hence the old at time $t+1$ receives $\epsilon G^{t+1}$ of the good), it is as if agents are able to save at rate $G$ higher than the interest rate, which improves welfare. Since this argument holds for all large enough $t$, we have a Pareto improvement, which implies the inefficiency of the fundamental equilibrium.

While the statements in Proposition \ref{prop:efficient}\ref{item:efficient_w>wb}\ref{item:efficient_b} are hardly surprising given the results of \citet{Diamond1965} and \citet{Tirole1985}, we note that statement \ref{item:efficient_f} that fundamental equilibria are inefficient in the bubble possibility regime may not be entirely obvious. In fact, as noted in the introduction, the well-known result of \citet{McCallum1987} shows that the introduction of a productive non-reproducible asset eliminates dynamic inefficiency in OLG models. Contrary to common understandings, this result is not necessarily true. This apparent conflict is due to the fact that \citet{McCallum1987} implicitly assumes steady state growth (see his discussion around Endnotes 20 and 21), which holds only for the knife-edge Cobb-Douglas case (which corresponds to $\gamma=1$ in our model, treated in Appendix \ref{subsec:gamma=1}). Once we consider the global parameter space with respect to $\gamma$, the region with dynamically inefficient equilibria always exists when $\gamma<1$ (the gray region in Figure \ref{fig:interest}). Furthermore, inefficient equilibria arise only in the intermediate region of the income ratio $e_2/e_1$, so there is a non-monotonic relationship between the income ratio and the existence of dynamically inefficient equilibria.

It is fair to say that there are diverse views on the welfare implications of asset price bubbles.\footnote{In our model, both the housing and rental markets are frictionless. Even if the rental market was missing, because agents are homogeneous, they end up being owner-occupants. In reality, rental markets could have frictions perhaps due to moral hazard issues. Then the welfare implications could change when agents are heterogeneous. For instance, \citet{GraczykPhan2021} consider an OLG model with housing, missing rental markets, and agents with heterogeneous incomes and find that housing bubbles hurt poor agents because they are priced out of the housing market. As such, welfare implications of housing bubbles may depend on the model details but the existence of the positive effect would survive even if we consider various extensions.} Although these are anecdotal, it is often noted that in Russia, people do not trust banks and government bonds because of the experience of the collapse of the Soviet Union and the default of Russian government bonds in 1998. Similarly, in the United Kingdom, there are concerns about the sustainability of pensions. These circumstances imply that there are not enough savings vehicles with high returns and instead, housing is an effective means of saving. In these situations, welfare would improve if the housing bubble raises housing yields. Proposition \ref{prop:efficient} captures the positive aspects of these housing bubbles.

\subsection{Credit-driven housing bubbles}\label{subsec:extension_credit}

In our model, because the young are homogeneous and the old exit the economy, there cannot be any borrowing or lending in equilibrium. In reality, housing is usually purchased using credit. To study the role of credit in generating housing bubbles in the simplest setting, we consider an open economy in which an external banking sector (\eg, foreign investors in mortgage-backed securities) provides exogenous credit.

To construct such a model, let $\set{(\tilde{e}_t^y,\tilde{e}_t^o)}_{t=0}^\infty$ be the endowment of some (closed) economy with corresponding equilibrium risk-free rate and housing expenditure $\set{(R_t,S_t)}_{t=0}^\infty$. 
Take any sequence $\set{\ell_t}_{t=0}^\infty$ such that $\ell_t\in [0,\tilde{e}_t^y)$ and define $e_0^o=\tilde{e}_0^o$ and $(e_t^y,e_{t+1}^o)=(\tilde{e}_t^y-\ell_t,\tilde{e}_{t+1}^o+R_t\ell_t)$ for $t\ge 0$. Then we can construct an equilibrium in which the endowment is $(e_t^y,e_t^o)$, the interest rate is $R_t$, the housing expenditure is $S_t$, and the external banking sector provides loan $\ell_t$ to the young at time $t$. We can see this as follows. At time $t$, the available funds of the young is $e_t^y+\ell_t=\tilde{e}_t^y$. At time $t+1$, because the old repay $R_t\ell_t$, the available funds is $e_{t+1}^o-R_t\ell_t=\tilde{e}_{t+1}^o$. Therefore, given the available funds and the interest rate $R_t$, it is optimal for the young to spend $S_t$ on housing, so we have an equilibrium.

Combining this argument with the analysis in \S\ref{sec:longrun}, even if the income share of the young $e_t^y/e_t^o$ is low and a bubbly equilibrium may not exist, if the young have access to sufficient credit, a housing bubble may emerge.

\begin{prop}\label{prop:credit}
Let everything be as in Theorem \ref{thm:gamma<1b} and suppose the banking sector is willing to lend $\ell_t=\ell G^t$ to the young. If the loan to income ratio satisfies
\begin{equation}
    w>\lambda\coloneqq \frac{\ell}{e_1}>\frac{w-w_b^*}{w_b^*+1}, \label{eq:lambda_cond}
\end{equation}
then there exists a bubbly long-run equilibrium. Under this condition, the housing price has order of magnitude
\begin{equation}
    P_t\sim e_1\left(\frac{w_b^*-w}{w_b^*+1}+\lambda\right)G^t=s^*e_1G^t+\ell_t, \label{eq:p_lambda}
\end{equation}
so credit increases the housing price one-for-one.
\end{prop}

Proposition \ref{prop:credit} has two implications. First, the fact that external credit may drive housing bubbles is at least consistent with some narratives during the U.S.\ housing boom in the early 2000s, including the famous remarks by \citet{Bernanke2005} on the ``global saving glut''. \citet{BertautDeMarcoKaminTryon2012} document that a substantial fraction of mortgages were financed through mortgage-backed securities purchased by European investors (``external banking sector'').  \citet{BarlevyFisher2021} document that the share of interest-only mortgages is correlated with the housing price growth rates across regions. Second, using \eqref{eq:p_lambda} and $G^\gamma<G$, by a similar calculation as in \eqref{eq:yz<1b}, the consumption of the young has the order of magnitude
\begin{equation*}
    c_t^y=e_t^y+\ell_t-P_t-r_t\sim e_1G^t+\ell_t-(s^*e_1G^t+\ell_t)=(1-s^*)e_1G^t,
\end{equation*}
which is independent of credit $\ell_t$. Therefore, once home buyers have access to sufficient credit such that a housing bubble emerges, increasing credit further ends up raising the housing price one-for-one with no real effect on the long-run consumption allocation and hence welfare. Note that in reality there are financing costs, so a housing bubble driven by excessive credit could hurt welfare. (See \citet{Barlevy2018} for a discussion of policy issues regarding bubbles.)

\section{Concluding remarks}\label{sec:conclude}

The theory of housing bubbles remains largely underdeveloped due to the fundamental difficulty of attaching bubbles to dividend-paying assets \citep{SantosWoodford1997}. In this paper, we have taken the first step towards building a theory of rational housing bubbles. We have presented a bare-bones model of housing bubbles with phase transitions that can be used as a stepping stone for a variety of applications. In concluding our article, we discuss directions for future research.

To analyze how equilibrium housing prices are determined in the process of economic development in a tractable way, we based our analysis on the classical overlapping generations model. However, a variety of generalizations are possible, including Bewley-type models with infinitely-lived agents as in \citet[\S5]{HiranoToda2025JPE}. We hope that our bare-bones model of housing bubbles will lead to a variety of extensions both in theoretical and quantitative analyses.

Our theoretical analysis also provides testable implications. First, from the analysis on the long-run behavior, housing bubbles are more likely to emerge if the incomes (or available funds through credit) of home buyers are higher or expected to be higher in the process of economic development. If the incomes of home buyers rise as economic development progresses, housing bubbles may naturally arise first by optimistic expectations, and then inevitably emerge as the optimistic fundamentals materialize. There is some empirical evidence consistent with this narrative. \citet{GyourkoMayerSinai2013} document that an increase in the high-income population in a metropolitan area is associated with high housing appreciation. The demographic structure could also be exploited to test our theory (\eg, improved longevity or early retirement make the old ``poorer''). Second, if there is a housing bubble on the long-run trend, rents grow at rate $G^\gamma$, whereas housing prices grow at rate $G$, implying that the price-rent ratio will rise. Hence, an upward trend in the price-rent ratio could be an indicator for housing bubbles. The findings of \citet[Fig.~1]{AmaralDohmenKohlSchularick2024} and \citet[Fig.~1]{Backer-PeralHazellMianYield} are consistent with this narrative, and the bubble detection literature \citep{PhillipsShi2020} could be applied. We hope that our theoretical framework may be useful for empirical researchers to investigate these issues further.

\appendix

\section{Proofs}\label{sec:proof}

\subsection{Proof of lemmas}

The following lemma lists a few implications of Assumption \ref{asmp:c} that will be repeatedly used.

\begin{lem}\label{lem:c}
Suppose Assumption \ref{asmp:c} holds and let $g(x)\coloneqq c(x,1)$. Then the following statements are true.
\begin{enumerate}
    \item The first partial derivatives of $c$ are given by
    \begin{subequations}\label{eq:c_partial1}
    \begin{align}
        c_y(y,z)&=g'(y/z)>0,\\
        c_z(y,z)&=g(y/z)-(y/z)g'(y/z)>0
    \end{align}
    \end{subequations}
    and are homogeneous of degree 0.
    \item The second partial derivatives are given by
    \begin{subequations}\label{eq:c_partial2}
    \begin{align}
        c_{yy}(y,z)&=\frac{1}{z}g''(y/z)<0, \\
        c_{yz}(y,z)&=-\frac{y}{z^2}g''(y/z)>0, \\
        c_{zz}(y,z)&=\frac{y^2}{z^3}g''(y/z)<0.
    \end{align}
    \end{subequations}
    \item Fixing $z>0$, the marginal rate of substitution $c_y/c_z$ is continuously differentiable and strictly decreasing in $y$ and has range $(0,\infty)$.
    \item The elasticity of intertemporal substitution is $\varepsilon(y,z)=\frac{c_yc_z}{cc_{yz}}>0$.
\end{enumerate}
\end{lem}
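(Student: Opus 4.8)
The plan is to reduce everything to the single-variable function $g(x)=c(x,1)$ and differentiate. Since $c$ is homogeneous of degree $1$, I would first write $c(y,z)=z\,g(y/z)$ for $z>0$. Differentiating this identity in $y$ and in $z$ yields the first-partial formulas in \eqref{eq:c_partial1} immediately, and homogeneity of degree $0$ of $c_y,c_z$ is then visible from the formulas (equivalently, differentiate $c(ty,tz)=tc(y,z)$ in $y$). The positivity statements $c_y=g'(y/z)>0$ and $c_z=g(y/z)-(y/z)g'(y/z)>0$ are just the hypotheses $c_y,c_z>0$ of Assumption~\ref{asmp:c} rewritten; in particular $g$ is strictly increasing. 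Differentiating the first-partial formulas once more and cancelling (in $c_{zz}$ the two terms carrying $g'$ cancel, leaving $y^2g''/z^3$) gives \eqref{eq:c_partial2}, and the hypothesis $c_{yy}<0$ forces $g''<0$, which in turn delivers $c_{yz}>0$ and $c_{zz}<0$.

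For part~3, fix $z>0$ and compute $\partial_y(c_y/c_z)=(c_{yy}c_z-c_yc_{yz})/c_z^2$. Substituting the formulas just derived, the numerator collapses to $g(y/z)g''(y/z)/z$, which is negative because $g>0$ (as $c>0$) and $g''<0$; hence $c_y/c_z$ is $C^1$ (since $c$ is $C^2$ and $c_z>0$) and strictly decreasing in $y$. For the range I would use homogeneity of degree $0$ to regard $c_y/c_z$ as a function of $x=y/z$ alone and read off the endpoint limits from the Inada conditions of Assumption~\ref{asmp:c}: as $y\to0$, $c_y\to\infty$ while $c_z$ stays bounded (using $xg'(x)\to0$ as $x\to0$, a consequence of concavity of $g$), so the ratio $\to\infty$; as $y\to\infty$, equivalently $z\to0$, $c_z\to\infty$ while $c_y=g'$ is bounded (being decreasing), so the ratio $\to0$. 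Continuity and strict monotonicity then give range $(0,\infty)$.

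For part~4, I would substitute into \eqref{eq:EIS_def}: with $x=y/z$, $\log(c_y/c_z)=\log g'(x)-\log(g(x)-xg'(x))$, so after the same cancellation as in part~3, $\diff\log(c_y/c_z)/\diff\log(y/z)=x\,g(x)g''(x)/\bigl(g'(x)(g(x)-xg'(x))\bigr)$. Inverting and negating gives $\varepsilon(y,z)=-g'(g-xg')/(x\,g\,g'')$, and since $c=zg$, $c_y=g'$, $c_z=g-xg'$, $c_{yz}=-xg''/z$, this equals $c_yc_z/(c\,c_{yz})$; positivity is then immediate from the signs established in parts~1--2 together with $c>0$.

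The argument is routine calculus throughout; the only step needing a little care is the range computation in part~3, where one must check that $c_z$ does not blow up as $y\to0$ and that $c_y$ does not blow up as $y\to\infty$ --- both consequences of concavity of $g$ (respectively $xg'(x)\to0$ and $g'$ decreasing) --- so that the Inada conditions genuinely pin down the two endpoint limits $\infty$ and $0$.
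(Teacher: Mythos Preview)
Your proposal is correct and follows essentially the same route as the paper: reduce everything to $g(x)=c(x,1)$ via $c(y,z)=zg(y/z)$, differentiate, and read off signs from Assumption~\ref{asmp:c}. The only tactical difference is in part~3: the paper rewrites $M(y)=c_y(y/z,1)/c_z(1,z/y)$ and reads monotonicity and the endpoint limits directly off $c_{yy}<0$, $c_{zz}<0$, and the two Inada conditions, whereas you compute $\partial_y(c_y/c_z)$ explicitly and handle the limits via $xg'(x)\to 0$ and boundedness of $g'$; both arguments are valid and equally short.
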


\begin{proof}
By definition, $g(x)=c(x,1)$. Therefore, $g'(x)=c_y(x,1)>0$ and $g''(x)=c_{yy}(x,1)<0$ by Assumption \ref{asmp:c}. Since $c$ is homogeneous of degree 1, we have $c(y,z)=zc(y/z,1)=zg(y/z)$. Then \eqref{eq:c_partial1} and \eqref{eq:c_partial2} are immediate by direct calculation.

Fixing $z>0$, define the marginal rate of substitution $M(y)=(c_y/c_z)(y,z)$. Then $M$ is continuously differentiable because $c$ is twice continuously differentiable and $c_y,c_z>0$. Since $c_y,c_z$ are homogeneous of degree 0, we have
\begin{equation}
    M(y)=\frac{c_y(y,z)}{c_z(y,z)}=\frac{c_y(y/z,1)}{c_z(1,z/y)}. \label{eq:My}
\end{equation}
Since $c_y,c_z>0$ and $c_{yy},c_{zz}<0$, the numerator (denominator) is positive and strictly decreasing (increasing) in $y$. Therefore, $M$ is strictly decreasing. Furthermore, since $c_y(0,z)=c_z(y,0)=\infty$, letting $y\downarrow 0$ and $y\uparrow \infty$ in \eqref{eq:My}, we obtain $M(0)=\infty$ and $M(\infty)=0$, so $M$ has range $(0,\infty)$.

Finally, we derive the elasticity of intertemporal substitution (EIS) $\varepsilon$. Since $c$ is homogeneous of degree 1, we have $c(\lambda y,\lambda z)=\lambda c(y,z)$. Differentiating both sides with respect to $\lambda$ and setting $\lambda=1$, we obtain
\begin{equation}
    yc_y+zc_z=c.\label{eq:c_Euler}
\end{equation}
Letting $\sigma=1/\varepsilon$ and $x=y/z$, by the chain rule we obtain
\begin{align*}
    \sigma&=-\frac{\partial \log (c_y/c_z)(xz,z)}{\partial \log x}=-x\frac{c_z}{c_y}\frac{zc_{yy}c_z-c_yzc_{yz}}{c_z^2}\\
    &=y\frac{c_yc_{yz}-c_zc_{yy}}{c_yc_z}=\frac{(yc_y+zc_z)c_{yz}}{c_yc_z}=\frac{cc_{yz}}{c_yc_z},
\end{align*}
where the last line uses \eqref{eq:c_partial2} and \eqref{eq:c_Euler}.
\end{proof}

\begin{proof}[Proof of Lemma \ref{lem:backward}]
Let $\cS_T=\set{S_t}_{t=T}^\infty$ be an equilibrium starting at $t=T$. Set $t=T-1$ and define the function $f:[0,e_{T-1}^y)\to \R$ by $f(S)=S_Tc_z-Sc_y+mc^\gamma$, where $c,c_y,c_z$ are evaluated at $(y,z)=(e_{T-1}^y-S,b_T+S_T)$. Then
\begin{equation*}
    f'(S)=-S_Tc_{yz}-c_y+Sc_{yy}-m\gamma c^{\gamma-1}c_y<0
\end{equation*}
by Lemma \ref{lem:c}. Clearly $f(0)=S_Tc_z+mc^\gamma>0$. Define
\begin{equation}
    \tilde{u}(y,z)\coloneqq u(c(y,z))=\begin{cases*}
        \frac{1}{1-\gamma}c(y,z)^{1-\gamma} & if $\gamma\neq 1$,\\
        \log(c(y,z)) & if $\gamma=1$.
    \end{cases*}\label{eq:vyz}
\end{equation}
Take any $\bar{y}>0$ and let $0<y<\bar{y}$. Using the chain rule and the monotonicity of $c$, we obtain
\begin{equation}
    \tilde{u}_y(y,z)=c(y,z)^{-\gamma}c_y(y,z)>c(\bar{y},z)^{-\gamma}c_y(y,z)\to \infty \label{eq:v_Inada}
\end{equation}
as $y\downarrow 0$ by Assumption \ref{asmp:c}. Using the definition of $f$, we obtain $f(S)c^{-\gamma}=S_T\tilde{u}_z-S\tilde{u}_y+m$.
Letting $S\uparrow e_{T-1}^y$ and using \eqref{eq:v_Inada}, we obtain $f(S)c^{-\gamma}\to -\infty$. Hence by the intermediate value theorem, there exists a unique $S_{T-1}\in (0,e_{T-1}^y)$ such that $f(S_{T-1})=0$. Therefore, there exists a unique equilibrium $\cS_{T-1}=\set{S_t}_{t=T-1}^\infty$ starting at $t=T-1$ that agrees with $\cS_T$ for $t\ge T$. The claim follows from backward induction.
\end{proof}

\subsection{Proof of Theorem \ref{thm:Gr}}

Take any equilibrium $\set{S_t}_{t=0}^\infty$. Using \eqref{eq:eq_r} and Assumption \ref{asmp:U}, the rent is
\begin{equation}
    r_t=m\frac{c^\gamma}{c_y}(e_1G^t-S_t,e_2G^{t+1}+S_{t+1}). \label{eq:r_explicit}
\end{equation}

We first show
\begin{equation}
    \limsup_{t\to\infty} r_t^{1/t}\le G^\gamma. \label{eq:rt_ub}
\end{equation}
Using the trivial bound $0\le S_t\le e_1G^t$, noting that $c$ is increasing in both arguments and $c_y$ is decreasing (increasing) in $y$ ($z$) by Lemma \ref{lem:c}, and using the homogeneity of $c$ and $c_y$, it follows from \eqref{eq:r_explicit} that
\begin{equation*}
    r_t\le m\frac{c(e_1G^t,(e_1+e_2)G^{t+1})^\gamma}{c_y(e_1G^t,e_2G^{t+1})}=me_1^\gamma\frac{c(1,G(1+w))^\gamma}{c_y(1,Gw)}G^{\gamma t}\eqqcolon \bar{r}G^{\gamma t}.
\end{equation*}
Taking the $1/t$-th power, we obtain $r_t^{1/t}\le G^\gamma \bar{r}^{1/t}$ for all $t$. Letting $t\to\infty$, we obtain \eqref{eq:rt_ub}.

We next show
\begin{equation}
    \liminf_{t\to\infty} G^{-t}S_t<e_1. \label{eq:St_liminf}
\end{equation}
Suppose to the contrary that $\liminf_{t\to\infty}G^{-t}S_t\ge e_1$. Using the trivial bound $S_t\le e_1G^t$, we obtain $\lim_{t\to\infty}G^{-t}S_t=e_1$. Take $\epsilon>0$ such that $G^{-t}S_t>e_1-\epsilon$ for large enough $t$. Then
\begin{equation*}
    \frac{r_t}{P_t}=\frac{r_t}{S_t-r_t}\le \frac{\bar{r}G^{\gamma t}}{(e_1-\epsilon)G^t-\bar{r}G^{\gamma t}}\sim \frac{\bar{r}}{e_1-\epsilon}G^{(\gamma-1)t}
\end{equation*}
as $t\to\infty$, so $\sum_{t=1}^\infty r_t/P_t<\infty$ because $\gamma<1$. By the Bubble Characterization Lemma \ref{lem:bubble}, there is a bubble. Using \eqref{eq:eq_R}, the homogeneity of $c$, and Assumption \ref{asmp:c}, the equilibrium interest rate satisfies
\begin{align*}
    R_t&=\frac{c_y}{c_z}(e_t^y-S_t,e_{t+1}^o+S_{t+1})\\
    &=\frac{c_y}{c_z}(e_1-G^{-t}S_t,G(e_2+G^{-t-1}S_{t+1}))\to \frac{c_y}{c_z}(0,G(e_1+e_2))=\infty
\end{align*}
as $t\to\infty$. Therefore, for any $R>G$, we can take $T>0$ such that $R_t\ge R>G$ for $t\ge T$. Letting $q_t>0$ be the Arrow-Debreu price, it follows that
\begin{equation*}
    q_tP_t=\left(q_T/\prod_{s=T}^{t-1}R_s\right)P_t\le q_T R^{T-t}e_1G^t=e_1q_TR^T(G/R)^t\to 0
\end{equation*}
as $t\to\infty$, so the no-bubble condition holds and there is no bubble, which is a contradiction.

Finally, we show
\begin{equation}
    \limsup_{t\to\infty}r_t^{1/t}\ge G^\gamma. \label{eq:rt_lb}
\end{equation}
Since \eqref{eq:St_liminf} holds, we can take $\bar{s}<1$ such that $S_t/e_t^y\le \bar{s}$ infinitely often. For such a subsequence, by a similar argument for proving \eqref{eq:rt_ub}, we obtain
\begin{equation*}
    r_t\ge m\frac{c((1-\bar{s})e_1G^t,e_2G^{t+1})^\gamma}{c_y((1-\bar{s})e_1G^t,(e_1+e_2)G^{t+1})}=me_1^\gamma\frac{c(1-\bar{s},Gw)^\gamma}{c_y(1-\bar{s},G(1+w))}G^{\gamma t} \eqqcolon \ubar{r}G^{\gamma t}.
\end{equation*}
Taking the $1/t$-th power, we obtain $r_t^{1/t}\ge G^\gamma \ubar{r}^{1/t}$. Letting $t\to\infty$, we obtain \eqref{eq:rt_ub}. The long-run rent growth rate \eqref{eq:Gr} follows from \eqref{eq:rt_ub} and \eqref{eq:rt_lb}. \hfill \qedsymbol

\subsection{Proof of Theorem \ref{thm:gamma<1f}}

\begin{proof}[Proof of Theorem \ref{thm:gamma<1f}\ref{item:f_w*f}]
By Lemma \ref{lem:c}, $(c_y/c_z)(y,G)$ is strictly decreasing in $y$ and has range $(0,\infty)$. Therefore, there exists a unique $y$ satisfying $(c_y/c_z)(y,G)=G^\gamma$. Since by Lemma \ref{lem:c} $c_y,c_z$ are homogeneous of degree 0, we have $(c_y/c_z)(1,G/y)=G^\gamma$, so $w_f^*=1/y$ uniquely satisfies \eqref{eq:w*f}.
\end{proof}

\begin{proof}[Proof of Theorem \ref{thm:gamma<1f}\ref{item:f_exist}] We divide the proof into several steps.

\setcounter{step}{0}

\begin{step}
    Derivation of an autonomous nonlinear difference equation.
\end{step}

By \eqref{eq:Pr<1f}, if a fundamental long-run equilibrium exists, then $S_t=P_t+r_t$ asymptotically grows at rate $G^\gamma$. Define the detrended variable $s_t\coloneqq S_t/(e_1^\gamma G^{\gamma t})$. Using the homogeneity of $c,c_y,c_z$, \eqref{eq:s_dynamics2} implies
\begin{equation}
    e_1^\gamma s_{t+1}G^{\gamma(t+1)}c_z-e_1^\gamma s_tG^{\gamma t}c_y+me_1^\gamma G^{\gamma t}c^\gamma, \label{eq:s_dynamics5}
\end{equation}
where $c,c_y,c_z$ are evaluated at
\begin{equation*}
    (y,z)=(1-s_te_1^{\gamma-1}G^{(\gamma-1)t},G(w+s_{t+1}e_1^{\gamma-1}G^{(\gamma-1)(t+1)})).
\end{equation*}
Dividing \eqref{eq:s_dynamics5} by $e_1^\gamma G^{\gamma t}$ and defining the auxiliary variable $\xi_t=(\xi_{1t},\xi_{2t})=(s_t,e_1^{\gamma-1}G^{(\gamma-1)t})$, it follows that \eqref{eq:s_dynamics2} can be rewritten as $\Phi(\xi_t,\xi_{t+1})=0$, where $\Phi:\R^4\to \R^2$ is given by
\begin{subequations}\label{eq:H2}
    \begin{align}
        \Phi_1(\xi,\eta)&=G^\gamma\eta_1c_z-\xi_1 c_y+mc^\gamma,\\
        \Phi_2(\xi,\eta)&=\eta_2-G^{\gamma-1}\xi_2
    \end{align}
\end{subequations}
and $c,c_y,c_z$ are evaluated at $(y,z)=(1-\xi_{1t}\xi_{2t},G(w+\xi_{1,t+1}\xi_{2,t+1}))$.\footnote{Obviously, \eqref{eq:H2} and \eqref{eq:H} are different because they correspond to the fundamental and bubbly equilibria, respectively.}

\begin{step}
    Existence and uniqueness of a fundamental steady state.
\end{step}

If a steady state $\xi_f^*$ of \eqref{eq:H2} exists, it must be $\xi_2=0$. Then the steady state condition is
\begin{equation*}
    G^\gamma sc_z-sc_y+mc^\gamma\iff s=m\frac{c^\gamma}{c_y-G^\gamma c_z},
\end{equation*}
where $c,c_y,c_z$ are evaluated at $(y,z)=(1,Gw)$. For $s>0$, it is necessary and sufficient that $c_y/c_z>G^\gamma$ at $(y,z)=(1,Gw)$. Since by Lemma \ref{lem:c} $c_y,c_z$ are homogeneous of degree 0 and $c_y/c_z$ is strictly increasing in $z$, there exists a fundamental steady state if and only if $w>w_f^*$.

\begin{step}
    Existence and local determinacy of equilibrium.
\end{step}

Define $\Phi$ by \eqref{eq:H2} and write $s=s^*$ to simplify notation. Noting that $\xi_f^*=(s^*,0)$, a straightforward calculation yields
\begin{align*}
    D_\xi \Phi(\xi_f^*,\xi_f^*)&=\begin{bmatrix}
        -c_y & -G^\gamma s^2c_{yz}+s^2c_{yy}-sm\gamma c^{\gamma-1}c_y\\
        0 & -G^{\gamma-1}
    \end{bmatrix},\\
    D_\eta \Phi(\xi_f^*,\xi_f^*)&=\begin{bmatrix}
        G^\gamma c_z & G^{\gamma+1}s^2c_{zz}-Gs^2c_{yz}+Gsm\gamma c^{\gamma-1}c_z\\
        0 & 1
    \end{bmatrix},
\end{align*}
where all functions are evaluated at $(y,z)=(1,Gw)$. Since $D_\eta \Phi$ is invertible, we may apply the implicit function theorem to solve $\Phi(\xi,\eta)=0$ around $(\xi,\eta)=(\xi_f^*,\xi_f^*)$ as $\eta=\phi(\xi)$, where
\begin{equation*}
    D\phi(\xi_f^*)=-[D_\eta \Phi]^{-1}D_\xi \Phi=\begin{bmatrix}
        \frac{c_y}{G^\gamma c_z} & \phi_{12}\\
        0 & G^{\gamma-1}
    \end{bmatrix}
\end{equation*}
and $\phi_{12}$ is unimportant. Since $c_y>G^\gamma c_z$, the eigenvalues of $D\phi$ are $\lambda_1=c_y/(G^\gamma c_z)>1$ and $\lambda_2=G^{\gamma-1}\in (0,1)$. Therefore, the steady state $\xi_f^*$ is a hyperbolic fixed point and the local stable manifold theorem \citep[Theorem 8.9]{TodaEME} implies that for any sufficiently large $e_1>0$ (so that $\xi_{20}=e_1^{\gamma-1}$ is close to the steady state value 0), there exists a unique orbit $\set{\xi_t}_{t=0}^\infty$ converging to the steady state $\xi_f^*$. However, by Assumption \ref{asmp:G}, choosing a large enough $e_1>0$ is equivalent to starting the economy at large enough $t=T$. Lemma \ref{lem:backward} then implies that there exists a unique equilibrium converging to the steady state regardless of the early endowments $\set{(e_t^y,e_t^o)}_{t=0}^{T-1}$.

\begin{step}
    The equilibrium objects have the order of magnitude in \eqref{eq:eqobj_gamma<1f} and the housing price equals its fundamental value.
\end{step}

The order of magnitude \eqref{eq:eqobj_gamma<1f} is obvious from $\lim_{t\to\infty}G^{-t}S_t=0$, the homogeneity of $c$, and Theorem \ref{thm:eq}. In equilibrium, both the housing price $P_t$ and rent $r_t$ asymptotically grow at rate $G^\gamma$. Therefore, $\sum_{t=1}^\infty r_t/P_t=\infty$, so there is no bubble by Lemma \ref{lem:bubble}.
\end{proof}

\begin{proof}[Proof of Theorem \ref{thm:gamma<1f}\ref{item:f_nonexist}]
Take any equilibrium. Because $h_t=1$ in equilibrium, in which case the utility $U(y,z,1)=u(c(y,z))+mu(1)$ is a monotonic transformation of $c(y,z)$, we can construct an equilibrium of an endowment economy without housing service in which agents have utility $c(y,z)$, the income of the young is $a_t\coloneqq e_t^y-r_t$, the income of the old is $b_t\coloneqq e_t^o$, and the asset pays dividend $r_t$. Condition \ref{item:necessity1} of Lemma \ref{lem:neccesity} follows from Assumptions \ref{asmp:U} and \ref{asmp:c}. Condition \ref{item:necessity2} of Lemma \ref{lem:neccesity} follows from Assumption \ref{asmp:G}, Theorem \ref{thm:Gr}, and $\gamma<1$. By \eqref{eq:Gr}, the long-run rent growth rate is $G_r\coloneqq G^\gamma$. Finally, since by Lemma \ref{lem:c} $c_y/c_z$ is strictly decreasing in $y$ (hence strictly increasing in $z$), if $w<w_f^*$, the autarky interest rate satisfies
\begin{equation*}
    R=\frac{c_y}{c_z}(e_1,e_2)=\frac{c_y}{c_z}(1,Gw)<\frac{c_y}{c_z}(1,Gw_f^*)=G^\gamma=G_r<G,
\end{equation*}
which is the bubble necessity condition \eqref{eq:necessity}. Therefore, all assumptions of Lemma \ref{lem:neccesity} are satisfied and the claim holds.
\end{proof}

\subsection{Proof of Theorem \ref{thm:gamma<1b}}

We divide the proof into several steps.

\setcounter{step}{0}

\begin{step}
    Existence and uniqueness of a bubbly steady state.
\end{step}

The proof of the existence and uniqueness of $w_b^*$ satisfying \eqref{eq:w*b} is identical to Theorem \ref{thm:gamma<1f}\ref{item:f_w*f}. Since $G>1$ and $\gamma<1$, it follows from \eqref{eq:w*f} and \eqref{eq:w*b} that
\begin{equation*}
    (c_y/c_z)(1,Gw_f^*)=G^\gamma<G=(c_y/c_z)(1,Gw_b^*).
\end{equation*}
Since $c_y/c_z$ is strictly increasing in $z$, we obtain $w_f^*<w_b^*$.

The steady state condition is $Gc_z-c_y=0$, where $c_y,c_z$ are evaluated at $(y,z)=(1-s,G(w+s))$. Using the homogeneity of $c_y,c_z$, this condition is equivalent to $(c_y/c_z)(y,G)=G$ for $y=\frac{1-s}{w+s}$, so the bubbly steady state is uniquely determined by
\begin{equation}
    \frac{1-s}{w+s}=\frac{1}{w_b^*}\iff s=\frac{w_b^*-w}{w_b^*+1}. \label{eq:s*<1}
\end{equation}
Since $s\in (0,1)$, a necessary and sufficient condition for the existence of a bubbly steady state is $w<w_b^*$.

\begin{step}
    Order of magnitude of equilibrium objects and asset pricing implications.
\end{step}

In any equilibrium converging to the bubbly steady state, by definition we have $S_t\sim se_1G^t$, where $s=s^*$ is the bubbly steady state. Therefore, \eqref{eq:yz<1b} follows from \eqref{eq:eq_yz}. Using \eqref{eq:eq_r} and Assumption \ref{asmp:U}, the rent is
\begin{equation}
    r_t=\frac{mu'(1)}{u'(c)c_y}=m\frac{c(e_t^y-S_t,e_{t+1}^o+s_{t+1})^\gamma}{c_y(e_t^y-S_t,e_{t+1}^o+s_{t+1})}.\label{eq:r_gamma}
\end{equation}
Substituting \eqref{eq:yz<1b} into \eqref{eq:r_gamma} and using the fact that $c$ is homogeneous of degree 1 and $c_y$ is homogeneous of degree 0, we obtain
\begin{equation*}
    r_t\sim me_1^\gamma \frac{c(1-s,G(w+s))^\gamma}{c_y(1-s,G(w+s))}G^{\gamma t}.
\end{equation*}
Since $r_t$ asymptotically grows at rate $G^\gamma<G$ because $\gamma<1$, we have $r_t/S_t\to 0$, so $P_t=S_t-r_t\sim S_t$ and \eqref{eq:Pr<1b} holds. Finally, \eqref{eq:R<1b} follows from \eqref{eq:R} and \eqref{eq:Pr<1b}.

Since the housing price $P_t$ and rent $r_t$ asymptotically grow at rates $G$ and $G^\gamma<G$, respectively, the rent-price ratio $r_t/P_t$ decays geometrically at rate $G^{\gamma-1}<1$. Therefore, $\sum_{t=1}^\infty r_t/P_t<\infty$, so there is a housing bubble by Lemma \ref{lem:bubble}.

\begin{step}
    Generic existence of equilibrium.
\end{step}

Define $\Phi$ by \eqref{eq:H} and write $s=s^*$ to simplify notation. Noting that $\xi_b^*=(s^*,0)$, a straightforward calculation yields
\begin{align*}
    D_\xi \Phi(\xi_b^*,\xi_b^*)&=\begin{bmatrix}
        -Gs c_{yz}-c_y+sc_{yy} & mc^\gamma \\
        0 & -G^{\gamma-1}
    \end{bmatrix},\\
    D_\eta \Phi(\xi_b^*,\xi_b^*)&=\begin{bmatrix}
        Gc_z+G^2s c_{zz}-Gsc_{yz} & 0\\
        0 & 1
    \end{bmatrix},
\end{align*}
where all functions are evaluated at $(y,z)=(1-s,G(w+s))$. If $D_\eta \Phi$ is invertible, we may apply the implicit function theorem to solve $\Phi(\xi,\eta)=0$ around $(\xi,\eta)=(\xi_b^*,\xi_b^*)$ as $\eta=\phi(\xi)$, where
\begin{equation*}
    D\phi(\xi_b^*)=-[D_\eta \Phi]^{-1}D_\xi \Phi=\begin{bmatrix}
        \phi_{11} & \phi_{12}\\
        0 & G^{\gamma-1}
    \end{bmatrix}
\end{equation*}
with
\begin{equation}
    \phi_{11}=\frac{Gs c_{yz}+c_y-sc_{yy}}{Gc_z+G^2s c_{zz}-Gsc_{yz}}\eqqcolon \frac{n}{d} \label{eq:h11}
\end{equation}
and $\phi_{12}$ is unimportant.  Therefore, $D\phi(\xi_b^*)$ has two real eigenvalues; one is $\lambda_1\coloneqq \phi_{11}$ and the other is $\lambda_2 \coloneqq G^{\gamma-1}\in (0,1)$ because $G>1$ and $\gamma\in (0,1)$.

Let us estimate $\lambda_1$. Using \eqref{eq:c_partial2}, the numerator of \eqref{eq:h11} is
\begin{align*}
    n&=c_y+s(Gc_{yz}-c_{yy})=c_y+s\left(-G\frac{y}{z^2}g''-\frac{1}{z}g''\right)\\
    &=c_y-s\frac{Gy+z}{z^2}g''=c_y-\frac{s(1+w)}{G(w+s)^2}g'',
\end{align*}
where we have used $(y,z)=(1-s,G(w+s))$. Similarly, the denominator is
\begin{align*}
    d&=Gc_z+Gs(Gc_{zz}-c_{yz})=Gc_z+Gs\left(G\frac{y^2}{z^3}g''+\frac{y}{z^2}g''\right)\\
    &=Gc_z+Gs\frac{y(Gy+z)}{z^3}g''=Gc_z+\frac{s(1-s)(1+w)}{G(w+s)^3}g''.
\end{align*}
At the steady state, we have $Gc_z=c_y=g'$, so
\begin{align}
    n&=g'-\frac{s(1+w)}{G(w+s)^2}g'', & d&=g'+\frac{s(1-s)(1+w)}{G(w+s)^3}g''. \label{eq:nd}
\end{align}
Since $s\in (0,1)$ and $g''<0$, clearly $n>d$.

We now study each case by the magnitude of the denominator $d$.

\begin{case}[$d>0$]
If $d>0$, then $0<d<n$ and hence $\lambda_1=n/d>1$. Since $\lambda_1>1>\lambda_2>0$, the steady state $\xi_b^*$ is a saddle point. The existence and uniqueness of an equilibrium path converging to the steady state $\xi_b^*$ follows by the same argument as in the proof of Theorem \ref{thm:gamma<1f}.
\end{case}
\begin{case}[$d=0$]
If $d=0$, the implicit function theorem is inapplicable and we cannot study the local dynamics by linearization.
\end{case}
\begin{case}[$d\in (-n,0)$]
If $-n<d<0$, then $\lambda_1=n/d<-1$. Therefore, $\xi_b^*$ is a saddle point and there exists a unique equilibrium by the same argument as in the case $d>0$.
\end{case}
\begin{case}[$d=-n$]
If $d=-n$, then $\lambda_1=n/d=-1$, the fixed point is not hyperbolic, and the local stable manifold theorem is inapplicable.
\end{case}
\begin{case}[$d<-n$]
If $d<-n$, then $\lambda_1=n/d\in (-1,0)$. Therefore, $\xi_b^*$ is a sink and there exist a continuum of equilibria by the same argument as in the case $d>0$.
\end{case}
In summary, there exists an equilibrium converging to the bubbly steady state except when $d=0$ or $d=-n$. Therefore, for generic $G$ and $w$, there exists an equilibrium. \hfill \qedsymbol

\subsection{Proof of Proposition \ref{prop:unique}}

We have already proved the uniqueness of the fundamental long-run equilibrium if $w>w_f^*$ in the proof of Theorem \ref{thm:gamma<1f}.

Suppose $w<w_b^*$. Let $s=\frac{w_b^*-w}{w_b^*+1}$ be the bubbly steady state and $(y,z)=(1-s,G(w+s))$. By the proof of Theorem \ref{thm:gamma<1b}, there exists a unique equilibrium converging to the bubbly steady state if $d\in (-n,0)\cup (0,\infty)$, where $d,n$ are the denominator and numerator in \eqref{eq:nd}. We rewrite this condition using the EIS defined by $\varepsilon=\frac{c_yc_z}{cc_{yz}}$. Using \eqref{eq:c_partial1}, \eqref{eq:c_partial2}, \eqref{eq:c_Euler}, and $Gc_z=c_y$ at the steady state, we obtain
\begin{equation*}
    \varepsilon=\frac{c_yc_z}{(yc_y+zc_z)c_{yz}}=\frac{c_y}{(Gy+z)c_{yz}}=-\frac{g'}{g''}\frac{G(w+s)^2}{(1-s)(1+w)}.
\end{equation*}
Therefore, \eqref{eq:nd} can be rewritten as
\begin{align}
    n&=\left(1+\frac{1}{\varepsilon}\frac{s}{1-s}\right)g',& d&=\left(1-\frac{1}{\varepsilon}\frac{s}{w+s}\right)g'. \label{eq:nd2}
\end{align}
Since $g'>0$, we have
\begin{align*}
    d=0&\iff \varepsilon=\frac{s}{w+s}=\frac{1-w/w_b^*}{1+w},\\
    n+d>0&\iff \varepsilon>\frac{s(1-w-2s)}{2(1-s)(w+s)}=\frac{1-w_b^*}{2}\frac{1-w/w_b^*}{1+w}.
\end{align*}
Therefore, the sufficient condition \eqref{eq:locdet_gamma<1} follows. \hfill \qedsymbol

\subsection{Proof of Proposition \ref{prop:efficient}}

To prove Proposition \ref{prop:efficient}, we need the following lemma.

\begin{lem}[Characterization of equilibrium efficiency]\label{lem:efficient}
Suppose Assumptions \ref{asmp:G}--\ref{asmp:c} hold and let $\set{S_t}_{t=0}^\infty$ be an equilibrium. Let $G_t=e_{t+1}^y/e_t^y$, $w_t=e_t^o/e_t^y$, and $s_t=S_t/e_t^y$. Let
\begin{equation}
    R_t=\frac{c_y}{c_z}(1-s_t,G_t(w_{t+1}+s_{t+1})) \label{eq:Rt}
\end{equation}
be the equilibrium risk-free rate and define the Arrow-Debreu price by $q_0=1$ and $q_t=1/\prod_{s=0}^{t-1}R_s$ for $t\ge 1$. Then the following statements are true.
\begin{enumerate}
    \item\label{item:efficient1} If $\liminf_{t\to\infty} R_t>G$, then the equilibrium is Pareto efficient.
    \item\label{item:efficient2} If $\limsup_{t\to\infty}s_t<1$, then the equilibrium is Pareto efficient if and only if
    \begin{equation}
        \sum_{t=0}^\infty \frac{1}{G^tq_t}=\infty. \label{eq:efficient_cond}
    \end{equation}
\end{enumerate}
\end{lem}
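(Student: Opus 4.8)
The plan is to exploit the reduction (set up in the paragraph preceding Proposition \ref{prop:efficient}) of the welfare problem to a one-good overlapping generations exchange economy, and to treat the two parts by different routes: part \ref{item:efficient1} by a direct summation over the generational budget inequalities, and part \ref{item:efficient2} by invoking the Pareto efficiency criterion of \citet{BalaskoShell1980}.

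First I would record that the equilibrium consumption allocation is price-supported: since $h_t=1$ and $U(\cdot,\cdot,1)$ is a strictly increasing transformation of the composite $c$, which is strictly quasi-concave on every budget line (such lines do not pass through the origin, and there $c$ is strictly concave because $c_{yy},c_{zz}<0$ by Assumption \ref{asmp:c}), the first-order condition \eqref{eq:foc}, i.e.\ $c_y/c_z=R_t$ at $(c_t^y,c_{t+1}^o)$, says that $(c_t^y,c_{t+1}^o)$ uniquely maximizes $c$ on the budget line through it with slope $-R_t$. Hence for any $(\tilde y,\tilde z)$ with $c(\tilde y,\tilde z)\ge c(c_t^y,c_{t+1}^o)$ one gets $q_t\tilde y+q_{t+1}\tilde z\ge q_tc_t^y+q_{t+1}c_{t+1}^o$, strictly if the composite is strictly larger. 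Now for part \ref{item:efficient1}, suppose toward a contradiction that a feasible allocation $\set{(\hat c_t^y,\hat c_t^o)}$ (which, since only the young value housing service and total supply is one unit, may be taken to assign one unit of service to each young cohort) weakly Pareto dominates the equilibrium, with some generation $\tau$ strictly better off, contributing a definite gap $\epsilon>0$ to its budget inequality. Summing these inequalities over $t=0,\dots,T$ with $T\ge\tau$, regrouping so that $\hat c_t^y+\hat c_t^o$ appears for $1\le t\le T$, and using feasibility $\hat c_t^y+\hat c_t^o=e_t^y+e_t^o=c_t^y+c_t^o\eqqcolon E_t$, the interior sum cancels and leaves
\begin{equation*}
    q_0\hat c_0^y+q_{T+1}\hat c_{T+1}^o\ge q_0c_0^y+q_{T+1}c_{T+1}^o+\epsilon .
\end{equation*}
The initial old being weakly better off gives $\hat c_0^o\ge c_0^o$, hence $\hat c_0^y\le c_0^y$ by feasibility at $t=0$; plugging this in and using $\hat c_{T+1}^o\le E_{T+1}$ yields $q_tE_t>\epsilon$ for all large $t$. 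But $\liminf_{t\to\infty}R_t>G$ gives $R_t\ge R'$ eventually for some $R'\in(G,\liminf_{t\to\infty}R_t)$, so $q_t\le C(R')^{-t}$ and $q_tE_t\le C'(G/R')^t\to0$ since $E_t=(e_1+e_2)G^t$ for $t\ge T$ by Assumption \ref{asmp:G}---a contradiction. (The argument is identical if the strict improvement occurs for the initial old.)

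For part \ref{item:efficient2}, I would pass to the detrended exchange economy: one good, aggregate endowment $1+w_t$ (bounded above and bounded away from zero, as $w_t\to w>0$), generation-$t$ utility $c(y,G_tz)$, and proposed allocation $(y_t,z_t)=(1-s_t,w_t+s_t)$. The hypothesis $\limsup_{t\to\infty}s_t<1$ makes $y_t=1-s_t$ bounded away from $0$ for large $t$, while $z_t\ge w_t$ is automatically bounded away from $0$, so the allocation lies in a compact subset of $\R_{++}^2$; together with the smoothness and strict quasi-concavity from Assumption \ref{asmp:c}, the hypotheses of the Balasko--Shell criterion are met. That criterion states the allocation is Pareto efficient if and only if $\sum_t 1/\pi_t=\infty$, where $\pi_t$ is the date-$t$ present-value price of the detrended economy. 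Since the detrended gross interest rate between $t$ and $t+1$ is $R_t/G_t$, one has $\pi_t=\prod_{s=0}^{t-1}(G_s/R_s)=q_t\prod_{s=0}^{t-1}G_s=q_te_t^y/e_0^y$, which for $t\ge T$ is a fixed positive multiple of $q_tG^t$; hence $\sum_t 1/\pi_t=\infty\iff\sum_{t}1/(G^tq_t)=\infty$, which is \eqref{eq:efficient_cond}.

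The main obstacle is the bookkeeping in part \ref{item:efficient2}: one must verify that the equilibrium consumption allocation genuinely is a competitive equilibrium (equivalently, a price-supported feasible allocation) of the detrended exchange economy, with endowments redistributing the aggregate resources appropriately, so that the Balasko--Shell machinery applies verbatim, and one must confirm that $\limsup_{t\to\infty}s_t<1$ is exactly the condition excluding the degenerate boundary behavior their theorem rules out. The translation of their divergent-sum condition into \eqref{eq:efficient_cond} via the detrended interest rate $R_t/G_t$ is then a routine but slightly delicate normalization check; part \ref{item:efficient1}, by contrast, is essentially self-contained via the summation argument above and needs no interiority hypothesis.
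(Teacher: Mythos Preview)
Your proposal is correct. Part \ref{item:efficient2} follows exactly the paper's route: detrend the economy, check the regularity hypotheses of \citet[Proposition 5.6]{BalaskoShell1980}, and translate their divergent-sum criterion $\sum_t 1/\tilde q_t=\infty$ into \eqref{eq:efficient_cond} via $\tilde q_t=q_t\prod_{s<t}G_s$; the paper is simply more explicit in ticking off their assumptions (a)--(e) one by one, which is what your ``bookkeeping obstacle'' amounts to.

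Part \ref{item:efficient1} is where you diverge. The paper stays within the Balasko--Shell framework here too: from $\liminf R_t>G$ it deduces $\tilde q_t\to 0$ and invokes their Proposition 5.3 directly. Your direct summation-over-budget-lines argument is the classical ``finite aggregate wealth $\Rightarrow$ first welfare theorem'' proof, carried out by hand. It is more self-contained (no external citation needed for this half), and as you observe it requires no interiority hypothesis on $s_t$, whereas the paper's appeal to Proposition 5.3 implicitly rides on the same detrending setup. The cost is that you need the strict-quasi-concavity step to get the price-supported inequality with a strict gap $\epsilon$ at generation $\tau$; the paper avoids reproving this by outsourcing the whole efficiency machinery to Balasko--Shell. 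Both approaches are clean; yours buys independence from the cited proposition at the price of a short extra argument.
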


\begin{proof}[Proof of Lemma \ref{lem:efficient}]
Let $u_t(y,z)=c(y,G_tz)$ be the utility function in the detrended economy. Then the implied gross risk-free rate at the proposed allocation $(c_t^y,c_{t+1}^o)=(1-s_t,w_{t+1}+s_{t+1})$ is
\begin{equation*}
    \tilde{R}_t\coloneqq \frac{u_{ty}}{u_{tz}}(1-s_t,w_{t+1}+s_{t+1})=\frac{1}{G_t}\frac{c_y}{c_z}(1-s_t,w_{t+1}+s_{t+1})=\frac{R_t}{G_t}.
\end{equation*}
Therefore, the Arrow-Debreu price in the detrended economy is $\tilde{q}_t=\prod_{s=0}^{t-1}(G_s/R_s)$.

We now apply the results of \citet{BalaskoShell1980}. If $\liminf_{t\to\infty}R_t>G$, then by Assumption \ref{asmp:G} we can take $R>G$ such that $R_t\ge R>G=G_t$ for $t$ large enough. Then $G_t/R_t\le G/R<1$, so we have $\lim_{t\to\infty} \tilde{q}_t=0$. Proposition 5.3 of \citet{BalaskoShell1980} then implies that the equilibrium is efficient.

We next consider the case $\bar{s}\coloneqq \limsup_{t\to\infty}s_t<1$. We verify each assumption of Proposition 5.6 of \citet{BalaskoShell1980}. Since the partial derivatives of $c$ can be signed as in Lemma \ref{lem:c}, the Gaussian curvature of indifference curves are strictly positive. Since the time $t$ aggregate endowment of the detrended economy is $1+w_t$, which is bounded by Assumption \ref{asmp:G}, it follows that the Gaussian curvature of indifference curves within the feasible region (weakly preferred to endowments) is uniformly bounded and bounded away from 0 because $1-\bar{s}>0$. Therefore, assumptions (a) and (b) hold. Since $\bar{s}<1$ and $G_t$, $w_{t+1}$ are bounded, the gross risk-free rate \eqref{eq:Rt} can be uniformly bounded from above and away from 0. Therefore, assumption (c) holds. Assumption (d) holds because $w_t$ is bounded, and assumption (e) holds because $\liminf_{t\to\infty}(1-s_t)=1-\bar{s}>0$. Since all assumptions are verified, Proposition 5.6 of \citet{BalaskoShell1980} implies that the equilibrium is efficient if and only if
\begin{equation}
    \infty=\sum_{t=0}^\infty \frac{1}{\tilde{q}_t}=\sum_{t=0}^\infty \frac{1}{q_t}\prod_{s=0}^{t-1}(1/G_s).\label{eq:BS_cond}
\end{equation}
Since by Assumption \ref{asmp:G} we have $G_t=G$ for large enough $t$, \eqref{eq:BS_cond} is clearly equivalent to \eqref{eq:efficient_cond}.
\end{proof}

\begin{proof}[Proof of Proposition \ref{prop:efficient}]
Suppose $\gamma<1$ and consider any equilibrium. Using \eqref{eq:Rt}, Assumption \ref{asmp:G}, Lemma \ref{lem:c}, and $s_t\ge 0$, we obtain
\begin{equation}
    R_t=\frac{c_y}{c_z}(1-s_t,G_t(w_{t+1}+s_{t+1}))\ge \frac{c_y}{c_z}(1,Gw) \label{eq:Rt_lb}
\end{equation}
for large enough $t$. If $w\ge w_b^*$, then \eqref{eq:Rt_lb}, Lemma \ref{lem:c}, and \eqref{eq:w*b} imply
\begin{equation*}
    R_t\ge \frac{c_y}{c_z}(1,Gw)\ge \frac{c_y}{c_z}(1,Gw_b^*)=G.
\end{equation*}
Since $R_t\ge G$ eventually, the sequence $1/(G^tq_t)=\prod_{s=0}^{t-1}(R_s/G)$ is positive and bounded away from 0. Therefore, \eqref{eq:efficient_cond} holds, and the equilibrium is efficient.

Suppose $w<w_b^*$ and take any bubbly equilibrium converging to the bubbly steady state. By \eqref{eq:Pr<1b}, we can take $p>0$ such that $P_t\ge pG^t$ for large enough $t$. Then
\begin{equation*}
    G^tq_t=\frac{1}{p}q_tpG^t\le \frac{1}{p}q_tP_t\le \frac{1}{p}P_0
\end{equation*}
using \eqref{eq:P_iter}. Since $G^tq_t$ is positive and bounded above, $1/(G^tq_t)$ is positive and bounded away from 0, so \eqref{eq:efficient_cond} holds and the equilibrium is Pareto efficient.

Suppose $w<w_b^*$ and take the (unique) fundamental equilibrium. Then by Theorem \ref{thm:gamma<1f} we have $s_t\coloneqq S_t/(e_1G^t)\to 0$. Then \eqref{eq:Rt}, $s_t\to 0$, and $w<w_b^*$ imply that
\begin{equation*}
    \lim_{t\to\infty}R_t=\frac{c_y}{c_z}(1,Gw)<\frac{c_y}{c_z}(1,Gw_b^*)=G.
\end{equation*}
Therefore, we can take $R<G$ and $T>0$ such that $R_t\le R<G$ for $t\ge T$. Since
\begin{equation*}
    \frac{1}{G^tq_t}=\prod_{s=0}^{t-1}(R_s/G)\le \frac{1}{G^Tq_T}(R/G)^{t-T},
\end{equation*}
the sum $\sum_{t=0}^\infty 1/(G^tq_t)$ converges to a finite value, so by Lemma \ref{lem:efficient}\ref{item:efficient2} the equilibrium is inefficient.
\end{proof}

\subsection{Proof of Proposition \ref{prop:credit}}

By the discussion before the proposition, the available funds to the young at time $t$ is $\tilde{e}_t^y=e_t^y+\ell_t=(e_1+\ell)G^t$ and the available funds of the old at time $t$ is $\tilde{e}_t^o=e_t^o-G\ell_{t-1}=(e_2-\ell)G^t$ at interest rate $G$. Therefore, by Theorem \ref{thm:gamma<1b}, a bubbly long-run equilibrium exists if
\begin{equation*}
    0<\frac{e_2-\ell}{e_1+\ell}<w_b^*\iff w>\frac{\ell}{e_1}>\frac{w-w_b^*}{w_b^*+1},
\end{equation*}
which is \eqref{eq:lambda_cond}. Under this condition, because the old to young available funds ratio is $\tilde{w}\coloneqq \frac{e_2-\lambda e_1}{e_1+\lambda e_1}=\frac{w-\lambda}{1+\lambda}$, using \eqref{eq:Pr<1b} we obtain the asymptotic housing price
\begin{equation*}
    P_t\sim e_1(1+\lambda)\frac{w_b^*-\tilde{w}}{w_b^*+1}G^t=e_1\frac{(1+\lambda)w_b^*-(w-\lambda)}{w_b^*+1}G^t,
\end{equation*}
which simplifies to \eqref{eq:p_lambda}. \hfill \qedsymbol

\printbibliography

\newpage

\begin{center}
    {\Huge Online Appendix}
\end{center}

\section{Definition and characterization of bubbles}\label{sec:bubble}

\subsection{Definition of housing bubbles}

Following the standard definition of rational bubbles in the literature \citep{HiranoToda2024JME,HiranoToda2025EJW}, we define a housing bubble by a situation in which the housing price exceeds its fundamental value defined by the present value of rents. Let $R_t>0$ be the equilibrium gross risk-free rate. Let $q_t>0$ be the Arrow-Debreu price of date-$t$ consumption in units of date-0 consumption, so $q_0=1$ and $q_t=1/\prod_{s=0}^{t-1}R_s$. Since by definition $q_{t+1}=q_t/R_t$ holds, using \eqref{eq:R} we obtain the no-arbitrage condition
\begin{equation}
    q_tP_t=q_{t+1}(P_{t+1}+r_{t+1}). \label{eq:no-arbitrage}
\end{equation}
Iterating \eqref{eq:no-arbitrage} forward, for all $T>t$ we obtain
\begin{equation}
    q_tP_t=\sum_{s=t+1}^Tq_sr_s+q_TP_T. \label{eq:P_iter}
\end{equation}
Since $q_sr_s\ge 0$, letting $T\to \infty$ in \eqref{eq:P_iter}, we have $\sum_{s=t+1}^\infty q_sr_s\le q_tP_t$, so we may define the \emph{fundamental value} of housing by the present value of rents
\begin{equation*}
    V_t\coloneqq \frac{1}{q_t}\sum_{s=t+1}^\infty q_sr_s.
\end{equation*}
Letting $T\to\infty$ in \eqref{eq:P_iter}, we obtain the limit
\begin{equation}
    0\le \lim_{T\to\infty} q_TP_T=q_t(P_t-V_t). \label{eq:TVC}
\end{equation}
When the limit in \eqref{eq:TVC} equals 0, we say that the \emph{no-bubble condition} holds and the asset price $P_t$ equals its fundamental value $V_t$. When $\lim_{T\to\infty} q_TP_T>0$, we say that the no-bubble condition fails and the asset price contains a \emph{bubble}. Note that under rational expectations, we have either $P_t=V_t$ for all $t$ or $P_t>V_t$ for all $t$. Throughout the rest of the paper, we refer to an equilibrium with (without) a housing bubble a \emph{bubbly (fundamental) equilibrium}.

The economic meaning of $\lim_{T\to\infty}q_TP_T$ is that it captures a speculative aspect, that is, agents buy housing now for the purpose of resale in the future, in addition to receiving rents. The limit $\lim_{T\to\infty}q_TP_T$ captures its impact on current housing prices. When the no-bubble condition holds, the aspect of speculation becomes negligible and housing prices are determined only by factors that are backed in equilibrium, namely rents. On the other hand, when the no-bubble condition is violated, equilibrium housing prices contain a speculative aspect.

\subsection{Characterization and necessity of bubbles}

In general, proving the existence or nonexistence of bubbles is challenging because in the limit \eqref{eq:TVC}, both the Arrow-Debreu price $q_t$ and the housing price $P_t$ are endogenous. Here we discuss two useful results. Because the context does not matter, we consider a general asset that pays dividend $D_t\ge 0$ and trades at price $P_t$ (both in units of the consumption good). The first is the following Bubble Characterization Lemma due to \citet{Montrucchio2004}.

\begin{lem}[Bubble Characterization, \citealp{Montrucchio2004}]\label{lem:bubble}
If $P_t>0$ for all $t$, the asset price exhibits a bubble if and only if $\sum_{t=1}^\infty D_t/P_t<\infty$.
\end{lem}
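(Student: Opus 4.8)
The plan is to work with the relation between $P_t$ and its fundamental value using the forward-iterated no-arbitrage identity, specialized here to a general dividend stream $D_t \ge 0$. Recall from \eqref{eq:P_iter} that $q_t P_t = \sum_{s=t+1}^{T} q_s D_s + q_T P_T$ for $T > t$, and that the bubble component is $\lim_{T\to\infty} q_T P_T$, which is nonnegative and, by the argument around \eqref{eq:TVC}, either vanishes for all $t$ or is strictly positive for all $t$. So the asset is bubbly iff $\beta_t \coloneqq q_t P_t - \sum_{s=t+1}^\infty q_s D_s > 0$, and the question reduces to controlling the tail sum $\sum q_s D_s$.

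The key device is to pass from the Arrow-Debreu prices $q_s$ to the ratios $D_s/P_s$. From the no-arbitrage condition $q_s P_s = q_{s+1}(P_{s+1} + D_{s+1})$ one gets $q_{s+1} P_{s+1} = q_s P_s - q_{s+1} D_{s+1} = q_s P_s \bigl(1 - \frac{q_{s+1}D_{s+1}}{q_s P_s}\bigr)$. Writing $a_s \coloneqq q_s P_s > 0$ (positive since $P_s > 0$ and $q_s > 0$), this says $a_{s+1} = a_s - q_{s+1} D_{s+1}$, so $(a_s)$ is nonincreasing and bounded below by $0$, hence converges; the limit is the (common, rescaled) bubble component. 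Now $a_s - a_{s+1} = q_{s+1} D_{s+1}$, and I want to compare $\sum_s q_s D_s$ with $\sum_s D_s/P_s$. The bridge is the elementary fact that for a nonincreasing positive sequence $a_s \downarrow a_\infty \ge 0$, the product $\prod_s (a_{s+1}/a_s)$ converges to a strictly positive limit if and only if $\sum_s (1 - a_{s+1}/a_s) < \infty$ (standard infinite-product criterion). Here $1 - a_{s+1}/a_s = q_{s+1}D_{s+1}/(q_s P_s) = D_{s+1}/P_{s+1} \cdot (q_{s+1}P_{s+1})/(q_s P_s) = (D_{s+1}/P_{s+1}) a_{s+1}/a_s$.

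So the plan is: (i) show $a_s = q_s P_s$ is nonincreasing and converges to $a_\infty \ge 0$, with $a_\infty > 0 \iff$ bubble (using that $P_t > 0$ for all $t$ makes all $a_s$ strictly positive, so $a_\infty > 0$ is equivalent to $\prod_s(a_{s+1}/a_s) > 0$); (ii) invoke the infinite-product convergence criterion to rewrite $a_\infty > 0 \iff \sum_s (1 - a_{s+1}/a_s) < \infty$; (iii) observe $1 - a_{s+1}/a_s = (D_{s+1}/P_{s+1})(a_{s+1}/a_s)$, and since $a_{s+1}/a_s \in (0,1]$ is bounded and (in the convergent-product case) bounded away from $0$, the series $\sum_s (1-a_{s+1}/a_s)$ and $\sum_s D_{s+1}/P_{s+1}$ converge or diverge together. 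This gives bubble $\iff \sum_{t\ge 1} D_t/P_t < \infty$.

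The one place that needs care — the main obstacle — is the last comparison in step (iii) when the product diverges, i.e. $a_\infty = 0$: then $a_{s+1}/a_s$ need not be bounded away from $0$, so one cannot blithely say $\sum(1-a_{s+1}/a_s)$ and $\sum D_{s+1}/P_{s+1}$ diverge together. I would handle this by a direct dichotomy: if $\sum_t D_t/P_t = \infty$, I want to conclude $a_\infty = 0$ (no bubble), which follows because $a_{s} - a_{s+1} = q_{s+1}D_{s+1} = a_{s+1} D_{s+1}/P_{s+1}$ gives $a_{s+1}/a_s = 1/(1 + D_{s+1}/P_{s+1})$, hence $a_\infty/a_0 = \prod_{s\ge 0} 1/(1+D_{s+1}/P_{s+1})$, and this product is $0$ precisely when $\sum_{s} D_{s+1}/P_{s+1} = \infty$. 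Conversely if the sum is finite the same product is strictly positive, so $a_\infty > 0$ and there is a bubble. Written this way the telescoping identity $a_{s+1}/a_s = 1/(1+D_{s+1}/P_{s+1})$ does all the work and no boundedness-away-from-zero is needed; the infinite-product fact reduces to the clean statement $\prod(1+x_s)^{-1} > 0 \iff \sum x_s < \infty$ for $x_s \ge 0$. I would present it in this streamlined form.
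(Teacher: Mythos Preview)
Your proposal is correct; the telescoping identity $a_{s+1}/a_s = 1/(1+D_{s+1}/P_{s+1})$ with $a_s\coloneqq q_sP_s$, together with the elementary infinite-product criterion $\prod_{s\ge 0}(1+x_s)^{-1}>0 \iff \sum_{s\ge 0} x_s<\infty$ for $x_s\ge 0$, cleanly yields both directions. The paper does not supply its own argument but simply refers the reader to \citet[Lemma 2.1]{HiranoTodaNecessity} (and ultimately \citealp{Montrucchio2004}); your streamlined form is precisely the standard proof given there, so there is nothing to add.
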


\begin{proof}
See \citet[Lemma 2.1]{HiranoToda2025JPE}.
\end{proof}

Lemma \ref{lem:bubble} is useful because it does not involve the Arrow-Debreu price $q_t$ and provides a necessary and sufficient condition for the existence of bubbles.

The second result is the Bubble Necessity Theorem due to \citet{HiranoToda2025JPE}. To make the paper self-contained but to avoid technicalities, here we specialize the setting of \citet{HiranoToda2025JPE}. As in \S\ref{sec:model}, consider a two-period OLG model with a long-lived asset but assume that there is a single perishable good and the date-$t$ dividend $D_t\ge 0$ is exogenous (unlike our setting with endogenous rents). Define the long-run dividend growth rate by
\begin{equation}
    G_d\coloneqq \limsup_{t\to\infty} D_t^{1/t}. \label{eq:Gd}
\end{equation}
Let $(a_t,b_t)$ be the date-$t$ endowments of the young and old, and let $P_t\ge 0$ be the (endogenous) equilibrium asset price.

\begin{lem}[\citealp{HiranoToda2025JPE}, Theorem 2]\label{lem:neccesity}
Suppose that
\begin{enumerate*}
    \item\label{item:necessity1} the utility function $U(y,z)$ is continuously differentiable, homothetic, and quasi-concave, and
    \item\label{item:necessity2} the endowments satisfy $G^{-t}(a_t,b_t)\to (a,b)$ as $t\to\infty$, where $G>0$, $a>0$, and $b\ge 0$.
\end{enumerate*}
Define the long-run autarky interest rate by $R\coloneqq (U_y/U_z)(a,b)$. If
\begin{equation}
    R<G_d<G, \label{eq:necessity}
\end{equation}
then all equilibria are bubbly with asset price $P_t$ satisfying $\liminf_{t\to\infty}P_t/a_t>0$.
\end{lem}

Although the proof of Lemma \ref{lem:neccesity} is technical and we refer the reader to \citet{HiranoToda2025JPE}, the intuition is clear. If a fundamental equilibrium exists, the asset price must grow at the same rate as dividends, which is $G_d$. If $G_d<G$, the asset price becomes negligible relative to the size of the economy, and hence the allocation approaches autarky. With an autarky interest rate of $R<G_d$, the present value of dividends (and hence the asset price) becomes infinite, which is impossible. Therefore, a fundamental equilibrium cannot exist.

\section{Elasticity of substitution at most 1}\label{sec:gamma>=1}

The analysis in the main text focused on the empirically relevant case of $\gamma<1$ (Footnote \ref{fn:elasticity}), that is, the elasticity of substitution between consumption and housing $1/\gamma$ exceeds 1. For completeness, we present an analysis for the case $\gamma\ge 1$.

\subsection{Elasticity of substitution below 1}

We first consider the case $\gamma>1$, so the elasticity of substitution $1/\gamma$ is less than 1. In this case we cannot study the local dynamics around the steady state by linearization because the implicit function theorem is not applicable due to a singularity. Nevertheless, we may characterize the asymptotic behavior of all equilibria as follows.

\begin{prop}[Equilibrium with $\gamma>1$]\label{prop:gamma>1}
Suppose Assumptions \ref{asmp:G}--\ref{asmp:c} hold, $\gamma>1$, and let $w=e_2/e_1$. Then the following statements are true.
\begin{enumerate}
    \item In any equilibrium, the equilibrium objects satisfy
    \begin{subequations}\label{eq:eqobj>1}
    \begin{align}
        \lim_{t\to\infty}(c_t^y,c_t^o)/(e_1G^t)&=(0,1+w), \label{eq:yz>1}\\
        \lim_{t\to\infty}(P_t,r_t)/(e_1G^t)&=(0,1), \label{eq:Pr>1}\\
        \lim_{t\to\infty}R_t&=\infty. \label{eq:R>1}
    \end{align}
    \end{subequations}
    \item There is no housing bubble and the price-rent ratio converges to 0.
    \item Any equilibrium is Pareto efficient.
\end{enumerate}
\end{prop}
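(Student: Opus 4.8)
The plan is to reduce everything to the single claim that the detrended housing expenditure $s_t\coloneqq S_t/(e_1G^t)$ converges to $1$, i.e.\ the young asymptotically spend their entire income on housing. Granting this, part~(i) follows from the equilibrium characterization in Theorem~\ref{thm:eq} together with the Inada condition $c_y(0,z)=\infty$ in Assumption~\ref{asmp:c}, while parts~(ii) and~(iii) are then immediate from the Bubble Characterization Lemma~\ref{lem:bubble} and from Lemma~\ref{lem:efficient}\ref{item:efficient1}. Throughout we use the normalization $(e_t^y,e_t^o)=(e_1G^t,e_2G^t)$, write $w\coloneqq e_2/e_1$, and recall that $0<s_t<1$ holds automatically by the budget constraint, so it suffices to exclude $\liminf_{t\to\infty}s_t<1$.

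The main step is $s_t\to1$, which I would establish from the detrended dynamics \eqref{eq:s_dynamics3}, namely $Gs_{t+1}c_z=s_tc_y-me_1^{\gamma-1}G^{(\gamma-1)t}c^\gamma$ with $c,c_y,c_z$ evaluated at $(y,z)=(1-s_t,G(w+s_{t+1}))$ (this is just \eqref{eq:s_dynamics2} divided by $e_1G^t$, so it is valid for every $\gamma$). Argue by contradiction: if $\liminf s_t<1$, pick $\bar s<1$ and a subsequence $(t_k)$ with $s_{t_k}\le\bar s$. By Lemma~\ref{lem:c}, with $x\coloneqq y/z$ one has $c_y=g'(x)$ (decreasing in $x$) and $c_z=g(x)-xg'(x)$ (increasing in $x$); along the subsequence $x_{t_k}=(1-s_{t_k})/(G(w+s_{t_k+1}))$ lies in the compact interval $[(1-\bar s)/(G(w+1)),\,1/(Gw)]$, which is bounded away from the Inada point $x=0$, so $c_z$ and $c_y$ are bounded and hence both $Gs_{t_k+1}c_z$ and $s_{t_k}c_y$ are bounded. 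On the other hand $c(1-s_{t_k},G(w+s_{t_k+1}))\ge c(1-\bar s,Gw)>0$ by monotonicity of $c$, so $me_1^{\gamma-1}G^{(\gamma-1)t_k}c^\gamma\to\infty$ since $\gamma>1$ and $G>1$. Then the right-hand side of \eqref{eq:s_dynamics3} tends to $-\infty$, contradicting the nonnegativity of the left-hand side. Hence $s_t\to1$, and thus $c_t^y/(e_1G^t)=1-s_t\to0$ and $c_t^o/(e_1G^t)=w+s_t\to1+w$, which is \eqref{eq:yz>1}.

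To conclude part~(i), by \eqref{eq:eq_R} and homogeneity $R_t=(c_y/c_z)(1-s_t,G(w+s_{t+1}))$; as $s_t,s_{t+1}\to1$ the argument tends to $(0,G(1+w))$, where $c_y\to\infty$ by Assumption~\ref{asmp:c} while $c_z$ stays bounded, so $R_t\to\infty$, i.e.\ \eqref{eq:R>1}. Then \eqref{eq:R} gives $P_t=S_{t+1}/R_t>0$, whence $P_t/(e_1G^t)=Gs_{t+1}/R_t\to0$, and $r_t=S_t-P_t$ gives $r_t/(e_1G^t)=s_t-P_t/(e_1G^t)\to1$, which is \eqref{eq:Pr>1}. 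For part~(ii): the price-rent ratio $P_t/r_t\to0$, and since $P_t>0$ for all $t$, Lemma~\ref{lem:bubble} applies with $\sum_{t\ge1}r_t/P_t=\infty$ (because $r_t/P_t\to\infty$), so there is no housing bubble. For part~(iii): $\liminf_{t\to\infty}R_t=\infty>G$, so Lemma~\ref{lem:efficient}\ref{item:efficient1} implies the equilibrium is Pareto efficient.

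The only delicate step is the contradiction establishing $s_t\to1$; the difficulty is entirely in choosing the estimates in \eqref{eq:s_dynamics3}, namely controlling $c_z$ and $s_tc_y$ uniformly along the subsequence using the monotonicity of $g(x)-xg'(x)$ and $g'(x)$ from Lemma~\ref{lem:c} while staying clear of the Inada blow-up at $x=0$, so that the exponentially growing dividend term $me_1^{\gamma-1}G^{(\gamma-1)t}c^\gamma$ forces a sign contradiction. Everything afterward is a routine deduction.
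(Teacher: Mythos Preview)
Your proof is correct and follows essentially the same strategy as the paper: reduce everything to $s_t\to 1$, establish this by contradiction from the exponentially growing term in \eqref{eq:s_dynamics3}, and then read off (i)--(iii) from Theorem~\ref{thm:eq}, Lemma~\ref{lem:bubble}, and Lemma~\ref{lem:efficient}\ref{item:efficient1}.

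There are two small differences in execution worth noting. For the contradiction step, the paper first passes to the transformed dynamics $Gs_{t+1}\tilde u_z=s_t\tilde u_y-me_1^{\gamma-1}G^{(\gamma-1)t}$ with $\tilde u=u\circ c$ (so the $c^\gamma$ factor is absorbed), extracts a convergent subsequence $(s_t,s_{t+1})\to(\underline s,\tilde s)$, and invokes the Inada property $\tilde u_y(0,z)=\infty$ to force $\underline s=1$. Your route is a touch more elementary: you stay with \eqref{eq:s_dynamics3}, avoid extracting a convergent subsequence, and instead confine $x_{t_k}=(1-s_{t_k})/(G(w+s_{t_k+1}))$ to a compact interval bounded away from $0$, so that $c_y,c_z$ are uniformly bounded while $c^\gamma$ is bounded below; the growing factor $G^{(\gamma-1)t_k}$ then does the work directly. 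For the no-bubble claim, the paper verifies the transversality condition by bounding $q_TP_T$ using $R_t\to\infty$ and $P_t\le e_1G^t$, whereas you appeal to Lemma~\ref{lem:bubble} via $r_t/P_t\to\infty$; both are immediate once \eqref{eq:Pr>1} is in hand. Your derivation of $R_t\to\infty$ via $c_y\to\infty$ with $c_z$ bounded is fine (indeed $c_{yz}>0$ from Lemma~\ref{lem:c} makes $c_z$ decreasing as $y\downarrow 0$), though you could also cite directly that $c_y/c_z$ has range $(0,\infty)$ and is strictly decreasing in $y$.
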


\begin{proof}
Let $\tilde{u}$ be defined by \eqref{eq:vyz}. Then the equilibrium dynamics \eqref{eq:s_dynamics3} can be written as
\begin{equation}
    Gs_{t+1}\tilde{u}_z=s_t\tilde{u}_y-me_1^{\gamma-1}G^{(\gamma-1)t}, \label{eq:s_dynamics6}
\end{equation}
where $\tilde{u}_y,\tilde{u}_z$ are evaluated at $(y,z)=(1-s_t,G(w+s_{t+1}))$. Define $\ubar{s}=\liminf_{t\to\infty}s_t$. Since $s_t\in (0,1)$, we have $0\le \bar{s}\le 1$. Take a subsequence of $(s_t,s_{t+1})$ such that $(s_t,s_{t+1})\to (\ubar{s},\tilde{s})$ for some $\tilde{s}$. Letting $t\to\infty$ in \eqref{eq:s_dynamics6} along this subsequence, we obtain
\begin{equation}
    0\le G\tilde{s}\tilde{u}_z(1-\ubar{s},G(w+\tilde{s}))=\ubar{s}\tilde{u}_y(1-\ubar{s},G(w+\tilde{s}))-\infty. \label{eq:s_dynamics_lim}
\end{equation}
Noting that $\tilde{u}_y(0,z)=\infty$ by \eqref{eq:v_Inada}, the only possibility for \eqref{eq:s_dynamics_lim} to hold is $\ubar{s}=1$. Then $s_t\to 1$, and
\begin{equation}
    \lim_{t\to\infty}\frac{S_t}{e_1G^t}=\lim_{t\to\infty}s_t=1. \label{eq:St_lim}
\end{equation}
Noting that $c_t^y=e_1G^t-S_t$ and $c_t^o=e_2G^t+S_t$, we obtain \eqref{eq:yz>1}. Using \eqref{eq:s_dynamics} and \eqref{eq:eq_r}, we obtain
\begin{equation}
    r_t=S_t-S_{t+1}\frac{U_z}{U_y}=S_t-S_{t+1}\frac{c_z}{c_y}, \label{eq:rt}
\end{equation}
where $c_y,c_z$ are evaluated at $(y,z)=(1-s_t,G(w+s_{t+1}))$. Dividing both sides of \eqref{eq:rt} by $e_1G^t$, letting $t\to\infty$, and using Lemma \ref{lem:c}, we obtain
\begin{equation*}
    \lim_{t\to\infty}\frac{r_t}{e_1G^t}=1-G\cdot 0=1.
\end{equation*}
Since $S_t=P_t+r_t$, we immediately obtain \eqref{eq:Pr>1}. Finally, the risk-free rate is
\begin{equation*}
    R_t=\frac{S_{t+1}}{P_t}=G\frac{S_{t+1}/(e_1G^{t+1})}{(S_t-r_t)/(e_1G^t)}\to G\frac{1}{1-1}=\infty,
\end{equation*}
which is \eqref{eq:R>1}.

Since $P_t\le S_t\sim e_1G^t$ grows at rate at most $G$ and the risk-free rate diverges to infinity (hence eventually exceeds the housing price growth rate), the no-bubble condition holds and there is no housing bubble. Using \eqref{eq:Pr>1}, we obtain $P_t/r_t\to 0$, so the price-rent ratio converges to 0. The Pareto efficiency of equilibrium follows from \eqref{eq:R>1} and Lemma \ref{lem:efficient}\ref{item:efficient1}.
\end{proof}

\subsection{Elasticity of substitution equal to 1} \label{subsec:gamma=1}

We next consider the case $\gamma=1$ (log utility), which is commonly used in applied theory. When $u(c)=\log c$, the difference equation \eqref{eq:s_dynamics3} reduces to
\begin{equation}
    Gs_{t+1}c_z=s_tc_y-mc, \label{eq:s_gamma=1}
\end{equation}
which is an autonomous nonlinear implicit difference equation. The following theorem shows that this difference equation admits a unique steady state, which defines a balanced growth path equilibrium.

\begin{prop}[Equilibrium with $\gamma=1$]\label{prop:gamma=1}
Suppose Assumptions \ref{asmp:G}--\ref{asmp:c} hold, $\gamma=1$, and let $w=e_2/e_1$. Then the following statements are true.
\begin{enumerate}
    \item\label{item:steady=1} There exists a unique steady state $s^*\in (0,1)$ of \eqref{eq:s_gamma=1}, which depends only on $G,w,c,m$.
    \item\label{item:order=1} There exists a unique balanced growth path equilibrium. The equilibrium objects satisfy
    \begin{subequations}\label{eq:eqobj_gamma=1}
        \begin{align}
            (c_t^y,c_t^o)&=((1-s^*)e_1G^t,(w+s^*)e_1G^t), \label{eq:yz=1}\\
            (P_t,r_t)&=\left(\frac{Gs^*c_z}{c_y}e_1G^t,m\frac{c}{c_y}e_1G^t\right), \label{eq:Pr=1}\\
            R_t&=\frac{c_y}{c_z}>G, \label{eq:R=1}
        \end{align}
    \end{subequations}
    where $c,c_y,c_z$ are evaluated at $(y,z)=(1-s^*,G(w+s^*))$.
    \item\label{item:price=1} In the equilibrium \eqref{eq:eqobj_gamma=1}, there is no housing bubble and the price-rent ratio $P_t/r_t$ is constant.
    \item Any equilibrium converging to the balanced growth path is Pareto efficient.
    \item\label{item:locdet=1} If in addition the elasticity of intertemporal substitution satisfies
    \begin{equation}
        \frac{1}{\varepsilon(y,z)}\coloneqq \frac{cc_{yz}}{c_yc_z}<\frac{1+w/s^*}{1+w}\left(1+Gw\frac{c_z}{c_y}\right) \label{eq:locdet_gamma=1}
    \end{equation}
    at $(y,z)=(1-s^*,G(w+s^*))$, then the equilibrium is locally determinate.
\end{enumerate}
\end{prop}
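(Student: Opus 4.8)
The plan is to regard \eqref{eq:s_gamma=1} as a scalar autonomous implicit difference equation $\Phi(s_t,s_{t+1})=0$ with $\Phi(s,s')\coloneqq Gs'c_z-sc_y+mc$ and $c,c_y,c_z$ evaluated at $(y,z)=(1-s,G(w+s'))$, to pin down its unique steady state, and to upgrade local statements to global ones through Lemma \ref{lem:backward}. I would work throughout with $\tilde u=\log c$ from \eqref{eq:vyz}, so that $\tilde u_y=c_y/c$, $\tilde u_z=c_z/c$, the Euler relation $y\tilde u_y+z\tilde u_z=1$ holds, and \eqref{eq:v_Inada} gives $\tilde u_y(y,z)\to\infty$ as $y\downarrow0$.

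For statement \ref{item:steady=1}, dividing the steady-state condition $s(c_y-Gc_z)=mc$ by $c$ shows $s^*$ solves $\Psi(s)\coloneqq s\,(\tilde u_y-G\tilde u_z)(1-s,G(w+s))=m$. I would check that $\Psi$ is continuous on $(0,1)$ with $\Psi(0^+)=0$ and $\Psi(1^-)=+\infty$ — the latter because $\tilde u_y\to\infty$ while $z\tilde u_z=1-y\tilde u_y\in(0,1)$ stays bounded (from $0<yc_y<c$) — which yields existence by the intermediate value theorem; and that $s\mapsto(\tilde u_y-G\tilde u_z)(1-s,G(w+s))$ is strictly increasing, because its derivative is $-(1,-G)^{\top}H\tilde u\,(1,-G)$ and, writing $H\tilde u=Hc/c-\nabla c\,\nabla c^{\top}/c^2$ with $c$ concave (Assumption \ref{asmp:c} via Berge), this quadratic form is positive: $(1,-G)^{\top}Hc\,(1,-G)=0$ would force $(1,-G)$ into the kernel $\mathrm{span}\{(y,z)\}\subset\R_{++}^2$ of $Hc$ (one-dimensional since $c$ is concave, homogeneous of degree $1$, and $c_{yy}<0$), which is impossible as $(1,-G)$ has mixed signs. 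Hence $\Psi$ is $s$ times a strictly increasing factor with at most one sign change, so $\Psi=m>0$ has a unique root $s^*\in(0,1)$; only $G,w,c,m$ enter the equation.

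Statements \ref{item:order=1}--\ref{item:price=1} are short once $s^*$ is in hand. Since $s^*\in(0,1)$, the constant path $s_t\equiv s^*$ satisfies $0<s^*e_1G^t<e_1G^t$, so by Theorem \ref{thm:eq} it is an equilibrium; \eqref{eq:eqobj} gives \eqref{eq:yz=1}, $U_h/U_y=v'(1)/(u'(c)c_y)=mc/c_y$ together with homogeneity gives $r_t=m(c/c_y)e_1G^t$, and $P_t=S_t-r_t$ rearranges to $Gs^*(c_z/c_y)e_1G^t$ using $s^*(c_y-Gc_z)=mc$, which also forces $R_t=c_y/c_z>G$. Uniqueness of the balanced growth path is immediate from \ref{item:steady=1}. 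For \ref{item:price=1}, $P_t/r_t=Gs^*c_z/(mc)$ is a positive constant, so $\sum_t r_t/P_t=\infty$ and Lemma \ref{lem:bubble} rules out a bubble. Finally, any equilibrium converging to the balanced growth path has $s_t,s_{t+1}\to s^*$, hence $R_t=(c_y/c_z)(1-s_t,G(w+s_{t+1}))\to R>G$, so $\liminf_{t}R_t>G$ and Lemma \ref{lem:efficient}\ref{item:efficient1} gives efficiency.

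The substantive part is the local determinacy claim \ref{item:locdet=1}. I would first observe that determinacy holds whenever $|\Phi_s|>|\Phi_{s'}|$ at $(s^*,s^*)$: the implicit function theorem (using $\Phi_s\ne0$) then gives a backward solution map $s_t=\psi(s_{t+1})$ near $s^*$ with $|\psi'(s^*)|=|\Phi_{s'}/\Phi_s|<1$, hence a local contraction fixing $s^*$, so any equilibrium staying near $s^*$ obeys $s_t=\psi^{(k)}(s_{t+k})\to s^*$ and is identically $s^*$. Differentiating $\Phi$ and using Lemma \ref{lem:c} gives, with $x=y/z$ at the steady state, $N\coloneqq-\Phi_s=-\tfrac{s^*g''}{z}(1+Gx)+(1+m)c_y>0$ and $\Phi_{s'}=G(1+m)c_z+\tfrac{Gs^*xg''}{z}(1+Gx)$; since $N>0$, $|\Phi_s|>|\Phi_{s'}|$ is equivalent to $N-\Phi_{s'}>0$ and $N+\Phi_{s'}>0$. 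The first, $N-\Phi_{s'}=-\tfrac{s^*g''}{z}(1+Gx)^2+(1+m)mc/s^*$, holds automatically (using $c_y-Gc_z=mc/s^*$), so determinacy reduces to $N+\Phi_{s'}=-\tfrac{s^*g''}{z}(1-(Gx)^2)+(1+m)(c_y+Gc_z)>0$. Into this I would substitute $-g''$ via $1/\varepsilon=cc_{yz}/(c_yc_z)=-xgg''/(g'(g-xg'))$, the identities $1+Gx=(1+w)/(w+s^*)$ and $xG(w+s^*)=1-s^*$, and the steady-state simplification $1+m=(c_y+Gwc_z)/c$ (from $c=(1-s^*)c_y+G(w+s^*)c_z$); after clearing positive factors and setting $R=c_y/c_z$, the condition $N+\Phi_{s'}>0$ becomes
\[
\frac{Gs^*(1+w)(w+2s^*-1)}{\varepsilon\,(1-s^*)(w+s^*)}+R+G(1+w)+\frac{G^2w}{R}>0 .
\]
If $w+2s^*\ge1$ the first term is nonnegative and this is automatic; if $w+2s^*<1$, using the factorization $R^2+GR(1+w)+G^2w=(R+G)(R+Gw)$ one verifies that the hypothesis \eqref{eq:locdet_gamma=1}, i.e.\ $\varepsilon>\tfrac{s^*(1+w)R}{(s^*+w)(R+Gw)}$, reduces the displayed inequality to $R(1-s^*)+G(s^*+w)>0$, which always holds. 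I expect this final sign-bookkeeping and the clean elimination of $m$ to be the main obstacle; everything else is routine calculus plus appeals to Theorem \ref{thm:eq}, Lemmas \ref{lem:c}, \ref{lem:bubble}, \ref{lem:backward}, \ref{lem:efficient}, and the already-proved parts.
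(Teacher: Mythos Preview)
Your proposal is correct, but two of your steps take a noticeably longer road than the paper. For \ref{item:steady=1}, the paper simply observes that $f(s)=\log c(1-s,G(w+s))+m\log s$ is strictly concave (affine composition plus $m\log s$) with $f'(0^+)=\infty$ and $f'(1^-)=-\infty$, so the unique maximizer is the steady state; this avoids your Hessian--kernel argument for strict monotonicity of $\tilde u_y-G\tilde u_z$, which is valid but heavier. For \ref{item:locdet=1}, the paper establishes the sharper equivalence \eqref{eq:locdet_gamma=1} $\iff \Phi_\eta=d>0$ directly (a one-line computation of $\operatorname{sgn}(\Phi_\eta)$ using Lemma \ref{lem:c} and the Euler relation), after which $n>d>0$ gives the forward slope $\phi'(s^*)=n/d>1$ and hence a source. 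Your backward-contraction route instead reduces to $n+d>0$, which is a strictly weaker requirement than $d>0$; you then verify via a case split on $w+2s^*\gtrless 1$ and the factorization $(R+G)(R+Gw)$ that \eqref{eq:locdet_gamma=1} implies $n+d>0$. That works, and has the mild conceptual advantage of never needing the forward implicit function theorem (only $\Phi_s\ne 0$, which is automatic), but the price is the substantial algebra you flag as the ``main obstacle''---the paper sidesteps all of it because \eqref{eq:locdet_gamma=1} already pins down the sign of $d$, and then $n+d>0$ is free. Parts \ref{item:order=1}--\ref{item:price=1} and the efficiency claim match the paper essentially verbatim.
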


\begin{proof}
We divide the proof into several steps.

\setcounter{step}{0}
\begin{step}
Existence and uniqueness of $s^*$.
\end{step}

Letting $s_t=s_{t+1}=s$ in \eqref{eq:s_gamma=1} and rearranging terms, we obtain the steady state condition
\begin{equation}
    Gsc_z=sc_y-mc\iff \frac{Gc_z-c_y}{c}+\frac{m}{s}=0,\label{eq:s=1cond}
\end{equation}
where $c,c_y,c_z$ are evaluated at $(y,z)=(1-s,G(w+s))$. Define $f:(0,1)\to \R$ by
\begin{equation*}
    f(s)\coloneqq \log c(1-s,G(w+s))+m\log s.
\end{equation*}
Then \eqref{eq:s=1cond} is equivalent to $f'(s)=0$. Since $s\mapsto (1-s,G(w+s))$ is affine, the logarithmic function is increasing and strictly concave, and $m>0$, Proposition 11.4 of \citet[p.~150]{TodaEME} implies that $f$ is strictly concave. Clearly $f'(0)=\infty$. Letting $\tilde{u}(y,z)=\log c(y,z)$, an argument similar to the derivation of \eqref{eq:v_Inada} shows $\tilde{u}_y(0,z)=\infty$. Therefore, $f'(1)=-\infty$. Since $f$ is strictly concave, it has a unique global maximum $s^*\in (0,1)$, which satisfies $f'(s^*)=0$ and hence \eqref{eq:s=1cond}. Clearly this $s^*$ depends only on $G,w,c,m$.

\begin{step}
    Existence, uniqueness, and characterization of a balanced growth path.
\end{step}
In any balanced growth path equilibrium, we must have $S_t=s^*e_1G^t$ for some $s^*\in (0,1)$. The previous step establishes the existence and uniqueness of $s^*$. The consumption allocation \eqref{eq:yz=1} follows from \eqref{eq:eq_yz}, and Assumption \ref{asmp:G}. The rent in \eqref{eq:Pr=1} follows from \eqref{eq:eq_r}, Assumption \ref{asmp:U}, and Lemma \ref{lem:c}. Using \eqref{eq:s=1cond}, we obtain the housing price
\begin{equation*}
    P_t=S_t-r_t=e_1G^t\left(s-m\frac{c}{c_y}\right)=e_1G^t\frac{sc_y-mc}{c_y}=e_1G^t\frac{Gsc_z}{c_y},
\end{equation*}
which is \eqref{eq:Pr=1}. Using \eqref{eq:s=1cond}, we obtain the gross risk-free rate
\begin{equation*}
    R_t=\frac{S_{t+1}}{P_t}=\frac{se_1G^{t+1}}{e_1Gs(c_z/c_y)G^t}=\frac{c_y}{c_z}=G+\frac{mc}{sc_z}>G,
\end{equation*}
which is \eqref{eq:R=1}. Clearly the price-rent ratio is constant by \eqref{eq:Pr=1}. Since $R>G$, we obtain
\begin{equation*}
    \lim_{T\to\infty}R^{-T}P_T=\lim_{T\to\infty} e_1\frac{Gsc_z}{c_y}(G/R)^T=0,
\end{equation*}
so the no-bubble condition holds and there is no housing bubble. The Pareto efficiency of equilibrium follows from \eqref{eq:R=1} and Lemma \ref{lem:efficient}\ref{item:efficient1}.

\begin{step}
    Sufficient condition for local determinacy of equilibrium.
\end{step}

Define the function $\Phi:(0,1)\times (0,\infty)\to \R$ by
\begin{equation}
    \Phi(\xi,\eta)=G\eta c_z-\xi c_y+mc, \label{eq:Phi_gamma=1}
\end{equation}
where $c,c_y,c_z$ are evaluated at $(y,z)=(1-\xi,G(w+\eta))$. Then \eqref{eq:s_gamma=1} can be written as $\Phi(s_t,s_{t+1})=0$ and $\Phi(s,s)=0$ holds, where we write $s=s^*$. Assuming that the implicit function theorem is applicable and partially differentiating \eqref{eq:Phi_gamma=1}, we can solve the local dynamics as $s_{t+1}=\phi(s_t)$, where
\begin{align}
    \phi'(s)=-\frac{\Phi_\xi}{\Phi_\eta}&=-\frac{-Gsc_{yz}+sc_{yy}-(1+m)c_y}{-Gsc_{yz}+G^2sc_{zz}+G(1+m)c_z}\notag \\
    &=\frac{(1+m)c_y+Gsc_{yz}-sc_{yy}}{G(1+m)c_z-Gsc_{yz}+G^2sc_{zz}}\eqqcolon \frac{n}{d}. \label{eq:h'}
\end{align}
By exactly the same argument as in the proof of Theorem \ref{thm:gamma<1b}, we obtain
\begin{align*}
    n&=(1+m)c_y-\frac{s(1+w)}{G(w+s)^2}g'',\\
    d&=G(1+m)c_z+\frac{s(1-s)(1+w)}{G(w+s)^3}g''.
\end{align*}
If $\phi'(s)>1$, then $s=s^*$ is a source and hence the balanced growth path equilibrium is locally determinate.

We now seek to derive a sufficient condition for local determinacy. Since $g''<0$, we have
\begin{equation*}
    n-d>(1+m)(c_y-Gc_z)=m(1+m)\frac{c}{s}>0,
\end{equation*}
where we have used \eqref{eq:s=1cond}. Therefore, if $\Phi_\eta=d>0$, then $\phi'(s)=n/d>1$ and we have local determinacy.

Using \eqref{eq:h'}, \eqref{eq:c_partial2}, and $\sigma\coloneqq \frac{cc_{yz}}{c_yc_z}$, the sign of $\Phi_\eta$ becomes
\begin{align*}
    \sgn(\Phi_\eta)&=\sgn\left(-\frac{Gy+z}{z}sc_{yz}+(1+m)c_z\right) \\
    &=\sgn\left(-\frac{Gy+z}{z}s\sigma \frac{c_yc_z}{c}+(1+m)c_z\right) \\
    &=\sgn\left(-\frac{Gy+z}{z}s\sigma c_y+(1+m)c\right).
\end{align*}
Using \eqref{eq:c_Euler} and \eqref{eq:s=1cond}, we obtain
\begin{equation*}
    \sgn(\Phi_\eta)=\sgn\left(-\frac{Gy+z}{z}s\sigma c_y+yc_y+zc_z+sc_y-Gsc_z\right).
\end{equation*}
Substituting $(y,z)=(1-s,G(w+s))$, dividing by $c_y>0$, and rearranging terms, we obtain
\begin{equation*}
    \sgn(\Phi_\eta)=\sgn\left(-\frac{G(1+w)}{G(w+s)}s\sigma+1+Gw\frac{c_z}{c_y}\right).
\end{equation*}
Therefore, we have $\Phi_\eta>0$ if and only if
\begin{equation*}
    \frac{1}{\varepsilon}=\sigma<\frac{1+w/s}{1+w}\left(1+Gw\frac{c_z}{c_y}\right),
\end{equation*}
which is exactly \eqref{eq:locdet_gamma=1}.
\end{proof}

\section{Stylized facts}\label{sec:fact}

This appendix presents stylized facts regarding housing prices and rents.

We use the regional housing price data from \texttt{Realtor.com}, which provides detailed monthly data at the county level since July 2016.\footnote{\url{https://www.realtor.com/research/data/}} We use the median listing price in July because the sales volume tends to be higher in spring and summer.

Regional rents are the Fair Market Rents (FMRs) from the U.S. Department of Housing and Urban Development (HUD).\footnote{\url{https://www.huduser.gov/portal/datasets/fmr.html}} FMRs are defined by estimates of 40th percentile gross rents for standard quality units within a metropolitan area or non-metropolitan county and are available for housing units with 0--4 bedrooms. We use the values for three bedrooms.

The number of housing units is ``All housing units'' in Quarterly Estimates of the Total Housing Inventory for the United States from the Census Bureau,\footnote{\url{https://www.census.gov/housing/hvs/data/histtab8.xlsx}} which is available since 1965.

Figure \ref{fig:units_GDP} shows the time series of U.S. real GDP and the total number of housing units, where we normalize the values in 1965 to 1. We can see that GDP growth is faster, justifying our assumption $G>1$ in the model.

\begin{figure}[htb!]
    \centering
    \includegraphics[width=0.6\linewidth]{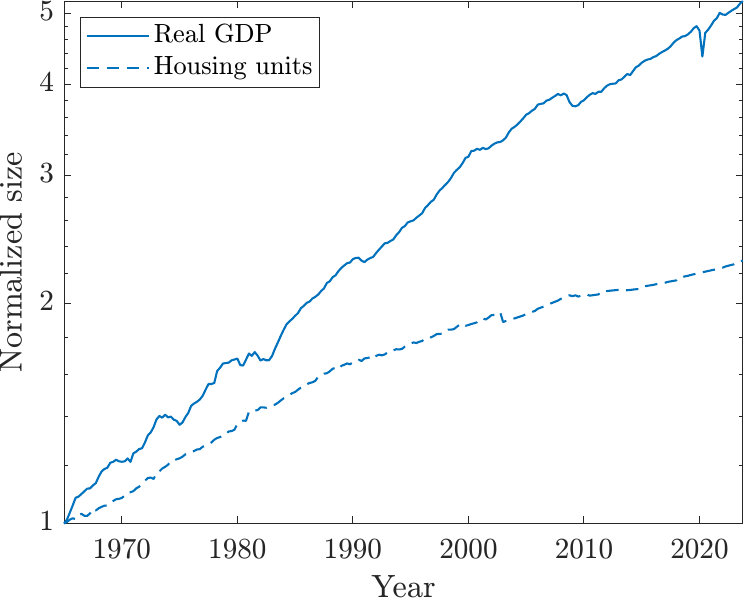}
    \caption{Growth of GDP and housing units.}
    \label{fig:units_GDP}
\end{figure}

Let $r_{it}$ and $P_{it}$ be the rent and housing price in county $i$ in year $t$ constructed above. Figure \ref{fig:rent_price} plots $\log P_{it}$ against $\log r_{it}$ for the year 2023 and estimates
\begin{equation*}
    \log P_{it}=\alpha+\beta \log r_{it}+\epsilon_{it},\quad i=1,\dots,I
\end{equation*}
by ordinary least squares (OLS) regression. The results for other years are all similar. Although this picture only documents correlation, the coefficient $\hat{\beta}=1.46>1$ corresponds to $1/\gamma$ in the model.

\begin{figure}[htb!]
\centering
    \includegraphics[width=0.6\linewidth]{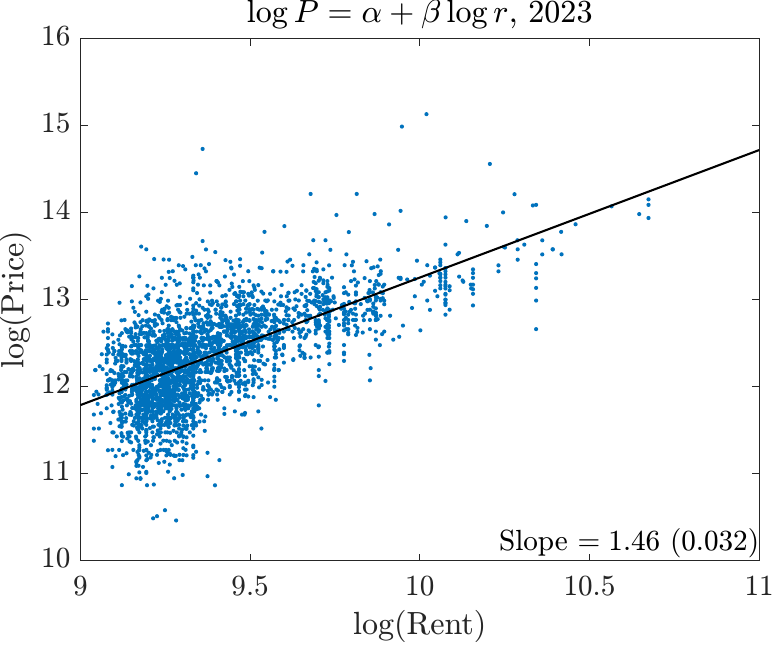}
    \caption{Rent and housing price across counties.}\label{fig:rent_price}
\end{figure}

\end{document}